\documentclass[12pt]{article}
\usepackage{amsmath}
\usepackage{amssymb}
\usepackage{amsthm}
\usepackage{mathtools}
\usepackage{graphicx}
\usepackage{xcolor}
\usepackage[numbers]{natbib}
\usepackage{url}
\usepackage{algorithm}
\usepackage{subcaption}
\usepackage{authblk}
\usepackage{booktabs}
\usepackage{bm}
\usepackage{rotating}
\usepackage[noend]{algpseudocode}
\usepackage[textwidth=6in,bottom=1in,top=1in,left=1in,right=1in]{geometry}
\usepackage[font=small,labelfont=bf]{caption}
\usepackage[toc,page]{appendix}
\usepackage{hyperref}

\newtheorem{theorem}{Theorem}
\newtheorem{lemma}{Lemma}

\newcommand{\data}{y}
\newcommand{\trueproc}{\xi}
\newcommand{\disc}{\delta_\theta}
\newcommand{\measerr}{\epsilon}
\newcommand{\model}{z(t_i,\theta)}
\newcommand{\param}{\theta}
\newcommand{\state}{z}

\title{A Sensitivity-based Framework for Calibrating Coupled Ordinary Differential Equations under Model Discrepancy}

\author[1,$^\dagger$]{Mitchel J. Colebank}
\author[2,$^\dagger$]{William Consagra}
\affil[1]{Department of Mathematics, University of South Carolina, Columbia, SC} 
\affil[2]{Department of Statistics, University of South Carolina, Columbia, SC}
\affil[$^\dagger$]{Authors Contributed Equally}

\begin{document}
\maketitle

\begin{abstract}
Coupled systems of ordinary differential equations (ODEs) are widely used to model complex physical and biological processes. In these applications, ODE model parameters must be calibrated to field data to enable prediction and parameter inference. In practice, the governing ODEs are an imperfect approximation to the true system, leading to non-negligible model discrepancy that must be incorporated to avoid biased parameter estimates. However, it is well known that highly flexible discrepancy models can confound discrepancy and simulator parameters, leading to poor identifiability. As a result, practitioners often must impose problem-specific prior constraints to adequately regularize the discrepancy, which can be challenging and cumbersome. In this work, we propose a novel calibration framework for coupled, multi-output ODE systems that enforces automatic, model-structural constraints on the discrepancy, improving identifiability without requiring application-specific discrepancy priors. Forward-model gradients are computed via sensitivity equations, solved jointly with the ODE system as a coupled initial value problem. Posterior inference is performed using an adaptive Metropolis-within-Gibbs sampler tailored to the resulting constrained posterior. We demonstrate the approach on three model systems: a mass-spring oscillator, a model of infectious disease spread, and a neuron firing model. We show that our method leads to improved parameter inference and predictive accuracy compared to standard calibration approaches.
\end{abstract}

\section{Introduction}
Model calibration is a necessary component in how simulation models are used for real-world decision making \cite{NatAcadDT,Smith2024}. The integration of data with a computer model, often called the \textit{inverse problem}, leads to a data-specific simulator that can be used for understanding an underlying process. In the case of mechanistic models, such as systems of differential equations, the model equations themselves embed hypotheses about the processes observed, and are almost always parameterized by interpretable variables that may or may not be measured \cite{Colebank2024G,Smith2024}. These simulation platforms form an important foundation for emerging  \textit{digital twin} technologies \cite{NatAcadDT}, as mechanistic models provide a way of forecasting the behavior of the system while being continually informed or updated by data. 
\par 
\par 
%Uncertainty quantification (UQ) comes in many mathematical forms, including quantifying model sensitivity based on prior beliefs in the parameters \citep{Alexanderian2021,Colebank2024,Kimpton2025}, identifying uncertainty in the model parameters \cite{consagra2024neural,Paun2020}, and propagating uncertainty through the simulator to quantify variability in output space \cite{Paun2020,Smith2024}. 
Accounting for \textit{model discrepancy}, defined as the systematic difference between the true physical process and its simulator, is critical when calibrating differential-equation-based models to ensure reliable inference, prediction, and uncertainty quantification (UQ) \citep{brynjarsdottir2014}.
%which is paramount for the transition towards digital twin enabled decision making \citep{NatAcadDT}. 
To address this important aspect of model calibration, in this work we investigate both parameter and predictive inference in the calibration of multi-output ordinary differential equation (ODE) models from noisy data under model discrepancy. Specifically, we consider data arising from a statistical model of the form
\begin{equation}\label{eqn:noise_inv_problem}
\begin{aligned}
        \data_i &= \trueproc(t_i) + \measerr_i \\
        &= Az(t_i,\theta) + \disc(t_i) + \measerr_i; \quad i=1,...,n.\\   
\end{aligned}
\end{equation}
In eq. \eqref{eqn:noise_inv_problem}, $A:\mathbb{R}^{J\times K}$ is a linear measurement operator,  $t_i\in\Omega\subset\mathbb{R}$ are observation times, $\param\in \Theta\subset\mathbb{R}^{p}$ is the unknown true calibration parameter, $\trueproc(\cdot)$ is the unknown true physical reality, and $\model:\Omega\mapsto \mathbb{R}^{K}$ is the forward model at some $t_i$. The term $\disc(t_i) = \trueproc(t) - \model$ is the model discrepancy and $\measerr_i$ is a mean-zero measurement error process. The matrix $A:\mathbb{R}^{K}\mapsto \mathbb{R}^{J}$ maps the full state vector $z(t_i,\theta)\in \mathbb{R}^{K}$ to the observable state space $\mathbb{R}^{J}$.  Given noisy observations $\mathcal{D} := \{(t_i,\data_i)\}_{i=1}^n$ from the physical process, the primary goals are (i) to infer $\param$ and quantify its uncertainty and (ii) to improve predictions of the physical process while quantifying their uncertainty.
\par 
The contribution of this work is a novel calibration framework that extends the popular class of orthogonal Gaussian process discrepancy priors \citep{Plumlee03072017} from univariate calibration to the calibration of mulit-output ODE simulators using multivariate observations. They key to our approach is the development of a multivariate discrepancy prior which is orthogonal to the forward model sensitives. We develop a complete analytical framework and scalable computational inference algorithm based on model sensitivity equations for our \textit{multivariate orthogonal sensitivity} (MOS) approach and apply it to several model types. We show that the MOS approach performs as well or better than the standard approach in both estimation and UQ.

The outline of the article is as follows. In Section \ref{sec:Background} we provide a brief overview of the general problem of calibration when model discrepancy is present, along with a review of calibration and sensitivity systems for ODE systems. Section \ref{sec:InferenceMethods} provides the mathematical preliminaries and derivation of the multivariate orthogonal discrepancy prior, as well as a computational inference algorithm for sampling from the relevant Bayesian posteriors. Section \ref{sec:NumericalExperiments} provides numerical details for our case studies, as well as three prototypical ODE system models: a mass-spring-damper system, the Fitzhu-Nagumo model of neuron activity, and a ``susceptible-infected-recovered-vaccinated'' (SIRV) model. Here, we investigate how our proposed MOS method compares to classical multivariate calibration approaches in both parameter and prediction metrics. We provide a discussion and conclusion in Section \ref{sec:Discussion}.

\section{Background}\label{sec:Background}

\subsection{Calibration with Model Discrepancy}  
\par 
Eq. \eqref{eqn:noise_inv_problem} is a special case of the calibration problem under model discrepancy. In the univariate output ($J=1$) setting, parameter inference and prediction for this model have received extensive treatment across the Bayesian \citep{brynjarsdottir2014,Chang2014,kennedy2001,Plumlee03072017,Xie2021}, frequentist \citep{Tuo2015,Wong2016}, and machine learning \citep{bhatnagar2022,huang2023learning,schmitt2024detectingmodelmisspecificationamortized,ward2022robust} literature. The influential approach by Kennedy and O'Hagan (KOH) \citep{kennedy2001} propose the use of a flexible Gaussian process prior on $\delta$. It is well known that this approach can suffer from identifiability issues between discrepancy and calibration parameters unless additional prior information is used to further constrain the prior on $\delta$ \citep{brynjarsdottir2014}.
%The influential approach in \cite{kennedy2001} propose an unconstrained GP prior on $\delta$. While leading to good predictions, $\delta$ can accommodate when the data $y$ is ``outside the range'' of the forward model, and this flexibility can make inference problematic \citep{brynjarsdottir2014}. 
A variety of remedies to this identifiability issue have been proposed \citep{Baker2022,sung2024review}. $L^2$-calibration \citep{Tuo2015} addresses the issue by redefining the true $\theta$ as an $L^2$ minimizer between the physical process and the forward model, 
%in our set-up,
%\begin{equation}\label{eqn:optimal_l2_param}
%    \theta = \min_{\eta\in\Theta}\|\xi(\cdot)-Az(\cdot,\eta)\|_{L^{2}(\Omega)}^2,
%\end{equation}
and propose an asymptotically consistent frequentist calibration procedure. Bayesian approaches constrain the discrepancy prior: \citep{Gu2018} encourages identifiability by imposing a prior to promote smaller (in $\|\cdot\|_{L^2}$ norm) discrepancies, while \citep{Plumlee03072017} use the first order conditions implied by $L^2$ optimality to linearly constrain a Gaussian process prior for $\delta_{\theta}$.
\par 
The $J\ge 2$ multivariate setting has garnered less attention. A common approach extends KOH by using independent multivariate Gaussian process priors \citep{higdon2008,PAULO20123959}.
%spitieris2023bayesian
However, this unconstrained approach can inherit similar or worse identifiability issues as outlined above in the univariate case. Constrained priors for $\delta_{\theta}$ in the multivariate setting are seemingly limited, though a recent work \citep{chakraborty2024orthogonal} proposes a framework based on posterior projections \citep{astfalck2024posteriorprojectioninferenceconstrained}, while \citep{white2023discrepancy} use PCA in the output space to map into a low-rank subspace, in which a simplified bi-linear model, motivated by identifiability considerations, is proposed for the discrepancy in the reduced space. A very recent study by Barnett and Chan \citep{Barnett2026} extended the $L^2$ orthogonality criteria from \citep{Plumlee03072017} to the multioutput setting. This last example considered discrepancy between Gaussian process surrogates and the true simulation model, but did not extend this to handing discrepancy between simulators and observations.

\subsection{Calibration for Multivariate ODE Systems}\label{ssec:calib_ode_systems}
We consider the case in which the state variables $z(t_i,\theta)$ in eq. \eqref{eqn:noise_inv_problem} evolve according to a system of ODEs. Systems of ODEs are a common modeling tool to describe the evolution of physical and biophysical processes \citep{Colebank2024G}. The broad definition of the system is given by
\begin{align}
\frac{d}{dt}z(t,\theta) = G(t,z(t,\theta),\theta), \quad z(0) = z_0,
\end{align}
with $z(t,\theta):\Omega\times \mathbb{R}^{p} \rightarrow \mathbb{R}^{K}$ being the state variables which evolve in time according to some evolution equation $G(t,z(t,\theta),\theta)$ from initial state $z_0$, and $\Omega\subset\mathbb{R}$ is the temporal domain. In the context of ODEs, the calibration problem in eq. \eqref{eqn:noise_inv_problem} now defines $A\model$ to be a mapping from the full states of the ODE system to the observations. This requires an additional level of care, as the solutions corresponding to observations are now approximations provided by numerical integration (e.g., using Runge-Kutta methods) at discrete time points for a possibly smaller subset of state variables (e.g., if $A$ is sparse). 

The most naive approach to solving these inverse problems is to (i) choose a current value for $\param$, (ii) numerically integrate the ODE system between $p+1$ and $2p+1$ times for a gradient approximation, and (iii) calculate a search direction by evaluating the residual between the observations and the subset of state solutions. This can become computationally expensive if either the ODE system is stiff, requiring small time-stepping, or if the dimensionality of the model is large, requiring repeated numerical integration in order to calculate the $p$ dimensional gradient of the objective function. 

Multiple prior studies across scientific domains have considered inverse problems governed by ODE systems. For instance, ODE models of temperature in air-cycle machines can be used for monitoring heat transfer in applications relevant to aerospace engineering \citep{Neal2019}. Epidemiological applications are especially well suited for using ODE models. Numerous applications have looked at simulating the transmission of disease from those who are susceptible to an infected and recovered state, often called the SIR model \citep{Tavener2011,Wang2026}. In neuroscience, models of neuronal excitation can be described through coupled ODE systems, where ion channels on a neuron can lead to excitatory and inhibitory interactions \citep{Doruk2019,Jensen2012}. While high-fidelity data on neuronal firing in rare, many groups use such ODE models to simulate whole-brain electrical activity as measured through imaging modalities \citep{consagra2024neural}. Finally, the cardiovascular field has extensively used ODE models with available clinical data for calibration \citep{Colebank2024,Lei2020,Shuttleworth2024}. These models include electrophysiological dynamics at the cell-level as well as compartment, organ-level models for simulating blood flow and overall cardiovascular function.
% MJC add some stuff here for papers that do use discrepancy versus ODE models

\subsection{Sensitivity Systems}
An alternative approach to approximating the gradient is to derive the so-called ``sensitivity equations" to provide the Jacobian of the system analytically \citep{Zi2011}. Consider the general form of the sensitivity of state $\state_j(t,\theta)$ with respect to parameter $\param_k$:
\begin{align}
    \mathcal{S}_{jk}(t) = \frac{\partial \state_j(t)}{\partial \param_k}.
\end{align}
Note that the sensitivities evolve in time with the dynamics of $z(t,\theta)$, and the total derivative of the differentiation equation with respect to $\param$ can be written as
\begin{align}
    \frac{d}{d\param}\frac{dz}{dt} = \frac{d}{dt}\frac{\partial z}{\partial \theta} &= \frac{d\mathcal{S}}{dt} = \frac{\partial}{\partial \theta}G(t,z(t,\theta),\theta) \\
    \implies \frac{d\mathcal{S}}{dt} &= \frac{\partial G}{\partial z} \frac{\partial z}{\partial \theta} + \frac{\partial G}{\partial \theta} =  \mathcal{G}_z\mathcal{S} + \mathcal{G}_\theta
    \end{align}
where $\partial G /\partial z  = \mathcal{G}_z \in \mathbb{R}^{(p \ \cdot K )\times (p \ \cdot K )}$ is a block diagonal system with the state-Jacobian, $\mathcal{J}\in\mathbb{R}^{p\times K}$, of the system on the diagonal, $\mathcal{S}$ is the sensitivity of the states with respect to the parameters, and $\partial G / \partial \theta = \mathcal{G}_\theta \in \mathbb{R}^{p \ \cdot K}$ is the parameter-Jacobian, with derivatives of the right hand side with respect to $\param$. We note that typically the sensitivity is written in matrix form with $\mathcal{S}\in\mathbb{R}^{K\times p}, \mathcal{G}_z \in \mathbb{R}^{K\times K},$ and $\mathcal{G}_\theta\in\mathbb{R}^{K\times P}$, but here we define everything in vectorized form for a cleaner translation to the available codes.

If the initial conditions of the ODE system are known, as will be assumed here, then $\mathcal{S}(0)=0$. Then, we can write the sensitivity in combination with the original ODE system as a coupled initial value problem:
\begin{align}
\frac{dz}{dt} &= G(t,z(t,\theta),\theta),   \ \ \ z(0) = z_0 \label{eq:states} \\
     \frac{d\mathcal{S}}{dt} &= \mathcal{G}_z \mathcal{S} + \mathcal{G}_\theta, \ \ \ \ \ \ \ \ \mathcal{S}(0) = 0 \label{eq:senses}
\end{align}
on the time interval $\Omega$.

Sensitivity systems have been the focus of some modeling studies, though typically these sensitivities are numerically approximated using finite differences. Some of the theoretical foundations for this approach were provided by seminal works by Leis et. al \citep{Leis1988} and Dickison et. al \citep{dickinson1976sensitivity} who derived such a sensitvity system for ordinary differential equation models. The review by Zi \citep{Zi2011} provides an excellent overview of how such sensitivity systems can be developed in biological modeling applications. Wu et al. \citep{Wu2008} provides examples of this in the context of biochemical processes, while work by Guay et. al \citep{Guay1995} apply this to chemical kinetic systems.
% MJC add some stuff here for papers that do use discrepancy versus ODE models

% Tools for differentiating ODE solution paths w.r.t. parameters, associated technical challenges of solving the forward problem, probably should mention emulation even if we don't treat it here ... use this to introduce relevant notation for the ODE-based forward model and sensitivity systems, e.g. what are the constraints on the ODE systems we study ($\theta$ cannot depend on $x$ for the current implementation, initial conditions can be rolled into $\theta$?, technical conditions on the class of ODE to make the differentials well defined, etc.

\section{Inference Methodology}\label{sec:InferenceMethods}

\subsection{Inferential Target and Observation Model}\label{ssec:ml2_model}
We assume that the sensitivity system defined in eq. \eqref{eq:senses} is well defined, smooth, and continuous \cite{Zi2011}.  We work in the inner product space induced by the direct sum of component-wise $L^2$ spaces $\mathcal{H}:=\bigtimes_{j=1}^J L^2(\Omega)$, with inner product defined as 
\begin{equation}\label{eqn:multivariate_l2_innter_prod}
\langle v, u \rangle_{\mathcal{H}} = \sum_{j=1}^J\alpha_j\langle v_j, u_j\rangle_{L^{2}(\Omega)},    
\end{equation}
where $\alpha_{j}\ge 0$ are optional weights. With loss of generality, for clarity of notation, from here on we will assume $\alpha_j=1$ $\forall j$. Clearly, $\langle \cdot, \cdot  \rangle_{\mathcal{H}}$ induces a norm, which we will denote as $\|\cdot \|_{\mathcal{H}}$. We define our inferential target using the $L^2$ calibration framework \citep{Tuo2015}, suitably extended to the multivariate setting using the loss induced by the norm $\|\cdot\|_{\mathcal{H}}^2$ to define the minimizer:\begin{equation}\label{eqn:optimal_calibration_param}
        \theta := \min_{\eta\in\Theta} \left\|\xi(\cdot) - Az(\cdot,\eta)\right\|_{\mathcal{H}}^2,
\end{equation}
where we assume $\Theta\subset\mathbb{R}^{p}$ is compact.
\par 
The observed data under model eq. \eqref{eqn:noise_inv_problem} with $J\ge 2$ is collected into the matrix $Y\in\mathbb{R}^{J\times n}$. We define the matrix of forward model and discrepancy evaluations at the fixed time points $t_1,...,t_n\in\Omega$ as $Z_{\theta},\Delta_{\theta}\in\mathbb{R}^{J\times n}$, respectively, with element-wise definitions $[Z_{\theta}]_{ji} = [Az]_j(x_i,\theta)$ and $[\Delta_{\theta}]_{ji} = \delta_j(x_i;\theta)$. We assume normality and homoscedasticity on the additive measurement errors with variance $\sigma^2$ and define the $J\times n$ matrix-valued random variable $E\sim \mathcal{MN}_{J\times n}(0, \sigma^2I_{J}, I_n)$, where 
$\mathcal{MN}$ denotes the matrix normal distribution. Thus, in matrix form \eqref{eqn:noise_inv_problem} becomes
\begin{equation}\label{eqn:multivariate_noise_inv_problem}
    Y = Z_{\theta} + \Delta_{\theta} + E 
\end{equation}
where, with slight abuse of notation, $Z_\theta$ incorporates the linear measurement operator. Taking a Bayesian approach, we are interested in sampling the marginal posterior density over the calibration parameter:
\begin{equation}\label{eqn:posterior_of_interest}
         \pi(\theta | Y) \propto \int \pi(Y|\Delta_{\theta},\theta)\pi(\Delta_{\theta}|\theta)\pi(\theta)d\Delta_{\theta}.
\end{equation}
Given the observation model eq. \eqref{eqn:multivariate_noise_inv_problem}, it follows that $\pi(Y|\Delta_{\theta},\theta)$ is a matrix normal density while $\pi(\theta)$ is the parameter prior density determined by problem specific information, e.g. uniform density over parameter bounds. In the following subsection, we propose a multivariate analog to the orthogonal discrepancy prior from \citep{Plumlee03072017} to construct the prior density $\pi(\Delta_{\theta}|\theta)$ based on the structural constraints imposed by Equation~eq. \eqref{eqn:optimal_calibration_param}.

\subsection{Multivariate Orthogonal Discrepancy Prior}\label{ssec:MOP}

We begin with the following lemma, which characterizes a collection of $p$ linear constraints for which the true multivariate discrepancy must obey under the definition of the optimal calibration in Equation~\ref{eqn:optimal_calibration_param}.

\begin{lemma}\label{lem:1st_order_orthogonality}
Denote $\mathcal{S}(t, \eta)\in\mathbb{R}^{K\times p}$ be the point-wise sensitivity of forward model $z(t,\eta)$, with element-wise definition $[\mathcal{S}(t,\eta)]_{jp}=\frac{\partial z_j(t,\eta)}{\partial\eta_p}$. Provided $\theta$ is in the interior of $\Theta$, the first order optimality conditions of $\theta$ from eq. \eqref{eqn:optimal_calibration_param} imply 
\begin{equation}\label{eqn:jac_subspace_oth}
    \int_{\Omega} \mathcal{S}(t, \theta)^{\intercal}A^{\intercal}\delta_{\theta}(t)dt = 0, \quad 0\in\mathbb{R}^p
\end{equation}
\end{lemma}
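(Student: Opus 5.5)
The plan is to write the $L^2$ calibration objective explicitly, differentiate it under the integral sign, and then evaluate the gradient at $\theta$ using the definition of the discrepancy. With all weights $\alpha_j=1$, the objective is
\begin{equation*}
L(\eta) := \left\|\xi(\cdot) - Az(\cdot,\eta)\right\|_{\mathcal{H}}^2 = \int_{\Omega} \big(\xi(t) - Az(t,\eta)\big)^{\intercal}\big(\xi(t) - Az(t,\eta)\big)\,dt,
\end{equation*}
which is a sum over $j=1,\dots,J$ of the component $L^2(\Omega)$ norms. Since $\theta$ minimizes $L$ over $\Theta$ and lies in the interior of $\Theta$, any local perturbation of $\theta$ stays feasible. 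Once we know $L$ is differentiable at $\theta$, this gives the first order condition $\nabla_\eta L(\theta)=0\in\mathbb{R}^p$.

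The main step, and the only real obstacle, is justifying the interchange of $\nabla_\eta$ and $\int_\Omega$. I will use the standing assumption of Section~\ref{ssec:ml2_model} that the sensitivity system eq.~\eqref{eq:senses} is well defined, smooth and continuous. Then $(t,\eta)\mapsto z(t,\eta)$ and $(t,\eta)\mapsto\mathcal{S}(t,\eta)$ are continuous on $\bar\Omega\times N$ for a compact neighborhood $N\subset\Theta$ of $\theta$. This needs $\Omega$ bounded, which holds for an observation interval, and $\xi\in\mathcal{H}$. The pointwise derivative of the integrand is $-2\,\mathcal{S}(t,\eta)^{\intercal}A^{\intercal}(\xi(t)-Az(t,\eta))$. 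On $\bar\Omega\times N$ it is dominated by $C(|\xi(t)|+1)$, which is integrable on a bounded $\Omega$ by Cauchy--Schwarz. Dominated convergence (differentiation under the integral sign) then gives
\begin{equation*}
\nabla_\eta L(\eta) = -2\int_\Omega \mathcal{S}(t,\eta)^{\intercal}A^{\intercal}\big(\xi(t)-Az(t,\eta)\big)\,dt .
\end{equation*}
Here I use the chain rule $\partial_{\eta_k}[Az(t,\eta)] = A\,\partial_{\eta_k} z(t,\eta)$, which is the $k$th column of $A\mathcal{S}(t,\eta)$ by the element-wise definition of $\mathcal{S}$ in Lemma~\ref{lem:1st_order_orthogonality}.

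Finally, evaluate at $\eta=\theta$. There, $\xi(t)-Az(t,\theta)=\delta_\theta(t)$ by the definition of the discrepancy in eq.~\eqref{eqn:noise_inv_problem}. Setting the gradient to zero and dividing by $-2$ gives exactly eq.~\eqref{eqn:jac_subspace_oth}. Componentwise, this says that $\delta_\theta$ is $\mathcal{H}$-orthogonal to each column of $A\mathcal{S}(\cdot,\theta)$, which is the form to be used when constructing the prior.

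If general weights $\alpha_j$ were kept, the same argument would insert $\mathrm{diag}(\alpha)$ between $A^{\intercal}$ and $\delta_\theta$. I would note this in a remark but will not carry it through.
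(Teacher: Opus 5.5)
Your proof is correct and follows essentially the same route as the paper's: compute the gradient of the $L^2$-calibration loss, evaluate it at the interior minimizer $\theta$, and use $\xi - Az(\cdot,\theta)=\delta_\theta$ to read off the orthogonality condition. The paper substitutes $\xi_j = a_j^{\intercal}z(\cdot,\theta)+\delta_j$ before differentiating, so two of the three resulting terms cancel at $\eta=\theta$, whereas you substitute afterward. Your dominated-convergence argument for moving $\nabla_\eta$ inside $\int_\Omega$ also supplies a step the paper leaves implicit.
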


% \begin{lemma}\label{lem:1st_order_orthogonality}
% Denote $J_{f}(x, \eta)\in\mathbb{R}^{J\times p}$ be the point-wise Jacobian of forward model $f(x,\eta)$, with element-wise definition $[J_{f}(x,\eta)]_{jp}=\frac{\partial f_j(x,\eta)}{\partial\eta_p}$. Provided $\theta$ is in the interior of $\Theta$, the first order optimality conditions of $\theta$ from eq. \eqref{eqn:optimal_calibration_param} imply 
% \begin{equation}\label{eqn:jac_subspace_oth}
%     \int_{\Omega} J_{f}(x, \theta)^{\intercal}\delta_{\theta}(x)dx = 0
% \end{equation}
% where $0\in\mathbb{R}^p$ denotes the vector of zeros.
% \end{lemma}
Given these a-priori structural constraints, we desire a discrepancy prior such that the realization satisfy eq. \eqref{eqn:jac_subspace_oth} with probability one. The following result, which is a multivariate extension of Theorem 3 from \cite{Plumlee03072017}, provides a general construction for such a prior.
\begin{theorem}\label{thrm:constrained_multivariate_prior}
Let $C_{\phi}(t,t^{\prime})$ be a symmetric cross-variance function for a $J$-multivariate process with q-dimensional hyperparameters $\phi$. Let 
\begin{equation}\label{eqn:JacProjectionOperators}
\begin{aligned}
    h_{\theta}(t) &= \int_{\Omega}\mathcal{S}(t^{\prime},\theta)^{\intercal}A^{\intercal}C_{\phi}(t^{\prime},t)dt^{\prime}\in\mathbb{R}^{p\times J}\\
    H_{\theta} &= \int_{\Omega}\int_{\Omega}\mathcal{S}(t^{\prime},\theta)^{\intercal}A^{\intercal}C_{\phi}(t^{\prime},t)A\mathcal{S}(t,\theta)dtdt^{\prime}\in\mathbb{R}^{p\times p}
\end{aligned}
\end{equation}
and define the covariance function 
\begin{equation}\label{eqn:orthogonal_cross_covariance}
    C_{\theta,\phi}(t,t^{\prime}) = C_{\phi}(t,t^{\prime}) - h_{\theta}^{\intercal}(t)H_{\theta}^{-1} h_{\theta}(t^{\prime}),
\end{equation}
Then for $\delta_{\theta}\sim GP(0,C_{\theta,\phi})$, eq. \eqref{eqn:orthogonal_cross_covariance} holds with probability one. 
\end{theorem}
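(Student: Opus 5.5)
Note first that the statement as typed says "eq. \eqref{eqn:orthogonal_cross_covariance} holds with probability one". The substantive claim is clearly that realizations satisfy the constraint of Lemma~\ref{lem:1st_order_orthogonality}, i.e.\ eq.~\eqref{eqn:jac_subspace_oth}. We prove that. Write $B(t) := A\mathcal{S}(t,\theta)\in\mathbb{R}^{J\times p}$, so that $h_\theta(t)=\int_\Omega B(t')^\intercal C_\phi(t',t)\,dt'$ and $H_\theta=\int\int B(t')^\intercal C_\phi(t',t)B(t)\,dt\,dt'$. The target is the $\mathbb{R}^p$-valued random vector
\[
L(\delta_\theta) := \int_\Omega B(t)^\intercal \delta_\theta(t)\,dt .
\]
The plan is the standard one for linear functionals of Gaussian processes, following the proof of Theorem 3 in \cite{Plumlee03072017}. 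First we show $L(\delta_\theta)$ is a well-defined, mean-zero Gaussian vector. Then we compute its covariance and show it is the zero matrix. A mean-zero random vector with zero covariance equals $0$ almost surely.

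\textbf{Well-definedness.} Assume $C_\phi$ is continuous on the compact set $\Omega\times\Omega$, and that $B$ is continuous, which follows from the assumed smoothness of the sensitivity system \eqref{eq:senses}. Assume also that $H_\theta$ is invertible, which is implicit in the statement. Then $C_{\theta,\phi}$ is continuous and bounded, so $\mathbb{E}\int_\Omega\|\delta_\theta(t)\|^2dt<\infty$. Hence $\delta_\theta\in\mathcal{H}$ almost surely, and $L(\delta_\theta)$ is defined as a Bochner/Riemann integral. It is Gaussian as an $L^2$ limit of Riemann sums of jointly Gaussian vectors, and it has mean zero.

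\textbf{Covariance.} By Fubini, which is justified by boundedness,
\[
\mathrm{Cov}(L(\delta_\theta)) = \int_\Omega\int_\Omega B(t)^\intercal C_{\theta,\phi}(t,t')B(t')\,dt\,dt'.
\]
Substituting eq.~\eqref{eqn:orthogonal_cross_covariance} splits this into two terms. The first is $\int\int B(t)^\intercal C_\phi(t,t')B(t')\,dt\,dt' = H_\theta$ after relabeling $t\leftrightarrow t'$. The second is
\[
\Big(\int_\Omega B(t)^\intercal h_\theta^\intercal(t)\,dt\Big)\,H_\theta^{-1}\Big(\int_\Omega h_\theta(t')B(t')\,dt'\Big).
\]
Using the symmetry $C_\phi(t',t)^\intercal=C_\phi(t,t')$, we get $\int h_\theta(t')B(t')\,dt' = H_\theta$ and $\int B(t)^\intercal h_\theta(t)^\intercal dt = H_\theta^\intercal = H_\theta$. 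The second term is therefore $H_\theta H_\theta^{-1}H_\theta = H_\theta$. The covariance is $H_\theta-H_\theta=0$, so $L(\delta_\theta)=0$ with probability one, which is eq.~\eqref{eqn:jac_subspace_oth}.

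\textbf{Main obstacle.} The algebra is routine. The delicate point is the measure-theoretic justification that $L(\delta_\theta)$ is a genuine Gaussian random vector whose covariance is obtained by exchanging expectation and the double integral. We handle this with the continuity and boundedness assumptions and a Riemann-sum or Karhunen--Lo\`eve argument. A secondary point is confirming that $C_{\theta,\phi}$ is a valid (positive semidefinite) covariance. It is the conditional covariance of $\delta$ given $L(\delta)$ under the unconstrained $GP(0,C_\phi)$ prior, since $\mathrm{Cov}(\delta(t),L(\delta))=h_\theta(t)^\intercal$ and $\mathrm{Cov}(L(\delta))=H_\theta$. This observation also gives an alternative proof: $\delta_\theta$ has the law of $\delta$ conditioned on $L(\delta)=0$.
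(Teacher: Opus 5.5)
Your proposal is correct and follows the paper's proof essentially step for step. Both arguments reduce the claim to showing that the Gaussian vector $\int_\Omega \mathcal{S}(t,\theta)^{\intercal}A^{\intercal}\delta_\theta(t)\,dt$ has mean zero and covariance $H_\theta - H_\theta H_\theta^{-1}H_\theta = 0$, and you correctly take the intended conclusion to be eq.~\eqref{eqn:jac_subspace_oth} rather than the mistyped reference in the statement. Your extra remarks fill in points the paper leaves implicit without changing the argument: the well-definedness and Fubini justification, the use of symmetry to get $H_\theta^{\intercal}=H_\theta$, and positive semidefiniteness of $C_{\theta,\phi}$ via the conditioning interpretation.
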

The quantities in eq. \eqref{eqn:JacProjectionOperators} require the evaluation of the Jacobian of the ODE system $z(\cdot, \theta)$ with respect to $\theta$, which can be done jointly with a forward pass of the model by solving the coupled sensitivity ODEs~\eqref{eq:states} and \eqref{eq:senses} to get $\mathcal{S}$.
% MJC: I don't think we need this section, do we?
% I agree 
% \subsection{Jacobian Calculation through Sensitivity System}
% The sensitivities $\mathcal{S}(t,z(t,\theta),\theta)$ represented the derivatives of the state variables $z(t,\theta)$. Hence, given some measurement operator $A$ as before, 

\subsection{Posterior Inference}\label{ssec:inf_alg}
Given the data and prior models outlined in Section~\ref{ssec:ml2_model} and Section~\ref{ssec:MOP}, respectively, our primary goal is to perform posterior inference for estimation and uncertainty quantification on $\theta$. We are also interested in inferring the model discrepancy. 
\par 
We begin by considering the calibration parameter posterior by calculating the marginalization integral eq. \eqref{eqn:posterior_of_interest} under the discrepancy prior outlined in Section~\ref{ssec:MOP}. Let $\text{vec}$ denote the vectorization operator and define the vectorized variables $\vec y := \text{vec}(Y)$, $\vec z_{\theta} := \text{vec}(Z_{\theta})$ and $\vec \delta_{\theta} := \text{vec}(\Delta_{\theta})$, where the stacking is performed columnwise, that is,  
$$
\text{vec}(\Delta_{\theta}) = \left(\vec\delta_1, \vec\delta_2, ..., \vec\delta_n\right)^{\intercal} \in\mathbb{R}^{Jn},
$$
where $\vec\delta_m = (\delta_1(t_m; \theta),...,\delta_J(t_m;\theta))\in\mathbb{R}^{J}$. Under the multivariate mean-zero GP prior with covariance function eq. \eqref{eqn:orthogonal_cross_covariance}, we have 
\begin{equation}\label{eqn:discretized_constrainted_kernel}
    \begin{aligned}
\text{Cov}\left[\vec \delta_{\theta}\right] &:= \begin{pmatrix}
\mathbb{E}\left[\vec\delta_1\vec\delta_1^{\intercal}\right] & \mathbb{E}\left[\vec\delta_1\vec\delta_2^{\intercal}\right] & \cdots & \mathbb{E}\left[\vec\delta_1\vec\delta_n^{\intercal}\right]  \\
\mathbb{E}\left[\vec\delta_2\vec\delta_1^{\intercal}\right] &   \mathbb{E}\left[\vec\delta_2\vec\delta_2^{\intercal}\right] &  \cdots & \vdots \\
\vdots & & \ddots \\
\mathbb{E}\left[\vec\delta_n\vec\delta_1^{\intercal}\right] & \cdots & & \mathbb{E}\left[\vec\delta_n\vec\delta_n^{\intercal}\right]
\end{pmatrix} 
 =\begin{pmatrix}
C_{\theta,\phi}(t_1,t_1) & C_{\theta,\phi}(t_1,t_2) & \cdots & C_{\theta,\phi}(t_1,t_n)  \\
C_{\theta,\phi}(t_2,t_1) &   C_{\theta,\phi}(t_2,t_2) &  \cdots & \vdots \\
\vdots & & \ddots \\
C_{\theta,\phi}(t_n,t_1) & \cdots & & C_{\theta,\phi}(t_n,t_n)
\end{pmatrix}. \\
\end{aligned} 
\end{equation}
Let $\vec t =(t_1,...,t_n)$ and 
with slight abuse of notation, we denote $\text{Cov}\left[\vec \delta_{\theta}\right]:= C_{\theta,\phi}(\vec  t, \vec  t)\in\mathbb{R}^{Jn\times J_n} $
and hence $\vec \delta_{\theta}\sim\mathcal{N}_{Jn}(0,C_{\theta,\phi}(\vec  t, \vec  t))$. \par 
From the linear model~eq. \eqref{eqn:multivariate_noise_inv_problem} we have that $Y|Z_{\theta}, \Delta_{\theta}$ is a matrix normal random variable: $\mathcal{MN}_{J\times n}(Z_{\theta} + \Delta_{\theta}, \sigma^2 I_{J}, I_n)$, and hence by properties of the matrix-normal we have the vectorized likelihood  
\begin{equation}\label{eqn:vectorized_lik}
    \pi(\vec y | z_{\theta},\delta_{\theta},\sigma^2) = \mathcal{N}_{Jn}(\vec y; \vec z_{\theta} + \vec \delta_{\theta} , \sigma^2I_{nJ}).
\end{equation}
Putting these identities together, we can write re-write the posterior of interest eq. \eqref{eqn:posterior_of_interest} in terms of the vectorized distributions as \begin{equation}\label{eqn:vectorized_parameter_posterior}
\begin{aligned}
     \pi(\theta | \vec y, \sigma^2, \phi) &\propto \int \mathcal{N}_{Jn}(\vec y; \vec z_{\theta} + \vec \delta_{\theta} , \sigma^2I_{nJ})\mathcal{N}_{Jn}(\delta_{\theta}; \vec 0, C_{\theta,\phi}(\vec t, \vec t)) \pi(\theta)d\vec \delta_{\theta} \\
     &\propto \mathcal{N}_{Jn}(\vec y;  \vec z_{\theta}, C_{\theta,\phi}(\vec t, \vec t) + \sigma^2I_{nJ})\pi(\theta) \\
     &\propto \frac{\exp\left(-\frac{1}{2}\left(\vec y - \vec z_{\theta}\right)^{\intercal}\left(  C_{\theta,\phi}(\vec t, \vec t) + \sigma^2I_{nJ}\right)^{-1} \left(\vec y - \vec z_{\theta}\right)\right)\pi(\theta)}{\det|C_{\theta,\phi}(\vec t, \vec t) + \sigma^2I_{nJ}|^{1/2}},
\end{aligned}
\end{equation}
where the second line follows from standard marginalization properties of Gaussian-Gaussian models.
\par 
We now calculate the conditional posterior of the discrepancy function at the observed  time points $\vec t$. By Bayes theorem, we have 
\begin{equation}\label{eqn:vectorized_discrepancy_conditional_posterior}
\begin{aligned}
\pi(\vec \delta_{\theta}|\vec y, \theta, \sigma^2, \phi) &\propto \pi(\vec y| \vec \delta_{\theta}, \theta)\pi(\delta_{\theta}|\theta,\phi) \\
&\propto \mathcal{N}_{Jn}(\vec y; \vec z_{\theta} + \vec \delta_{\theta} , \sigma^2I_{nJ})\mathcal{N}_{Jn}(\delta_{\theta}; \vec 0, C_{\theta,\phi}(\vec t, \vec t)) \\
&\propto \mathcal{N}_{Jn}\left(\vec \delta_{\theta}; \frac{1}{\sigma^{2}}\Gamma_{\theta}^{-1}\left(\vec y - \vec z_{\theta}\right), \Gamma_{\theta}^{-1}\right) \\
\end{aligned}
\end{equation}
where 
$$
\Gamma_{\theta} = \left[C_{\theta,\phi}(\vec t, \vec t)\right]^{-1} + \frac{1}{\sigma^2}I_{nJ},
$$
which follows from the Conjugate properties of Gaussian-Gaussian models. Then the conditional posterior of the discrepancy is given by 
\begin{equation}\label{eqn:vectorized_discrepancy_posterior}
    \pi(\vec \delta_{\theta}|\vec y, \sigma^2, \phi) \propto \int_{\Theta}\mathcal{N}_{Jn}\left(\vec \delta_{\theta}; \frac{1}{\sigma^2}\Gamma_{\theta}^{-1}\left(\vec y - \vec z_{\theta}\right), \Gamma_{\theta}^{-1}\right)\pi(\theta | \vec y, \sigma^2, \phi)d\theta.
\end{equation}
In general, eq. \eqref{eqn:vectorized_discrepancy_posterior} will not have a closed form. However, it can be sample from via Monte-Carlo integration, where draws from eq. \eqref{eqn:vectorized_parameter_posterior} are used to approximate the integral.
\par 
To construct a conditional posterior for the measurement error variance $\sigma^2$, we use an inverse-gamma prior $\pi(\sigma^2)=\operatorname{IG}(\sigma^2; \alpha_{\sigma^2},\beta_{\sigma^2})$. Then we have the closed form conditional posterior
\begin{equation}\label{eqn:conjugate_noise_var}
\begin{aligned}
    \pi(\sigma^2\mid \vec y,\theta,\vec\delta_{\theta})
    &\propto \mathcal{N}_{Jn}(\vec y; \vec z_{\theta} + \vec \delta_{\theta} , \sigma^2I_{nJ})\operatorname{IG}(\alpha_{\sigma^2},\beta_{\sigma^2}) \\
    &\propto \operatorname{IG}\left(\sigma^2;
    \alpha_{\sigma^2}+\frac{Jn}{2},
    \beta_{\sigma^2}+\frac{1}{2}\left|\vec y-\vec z_{\theta}-\vec\delta_{\theta}\right|_2^2\right),
\end{aligned}
\end{equation}
which follows from standard results in conjugacy. 

For the unconstrained discrepancy kernel hyper-parameters $\phi$, we assume independent uniform priors over problem-specific bounds, denoted collectively by $\pi(\phi)$. Extending eq.~\eqref{eqn:vectorized_parameter_posterior}, the joint conditional posterior of the calibration and kernel parameters is
\begin{equation}\label{eqn:joint_parameter_posterior}
\begin{aligned}
\pi(\theta,\phi\mid \vec y,\sigma^2) &\propto
\frac{
\exp\left[
-\frac{1}{2}
(\vec y-\vec z_{\theta})^{\intercal}
\left(
C_{\theta,\phi}(\vec t,\vec t)+\sigma^2I_{nJ}
\right)^{-1}
(\vec y-\vec z_{\theta})
\right]
\pi(\theta)\pi(\phi)
}{
\det\left|
C_{\theta,\phi}(\vec t,\vec t)+\sigma^2I_{nJ}
\right|^{1/2}
}.
\end{aligned}
\end{equation}

\begin{algorithm}[!h]
\caption{Adaptive Metropolis-within-Gibbs Posterior Inference}
\label{alg:inference_algo}

{\fontsize{11pt}{11pt}\selectfont
\begin{algorithmic}[1]
\State \textbf{Input} $\vec y$, $\vec t$, initial values
$(\theta^{(1)},\phi^{(1)},\sigma^{2(1)})$,
proposal covariance $V_0$, adaptation frequency $k_0$, number of iterations $M$.
\State \textbf{Output} Samples  
$\{\theta^{(m)},\phi^{(m)},\vec\delta^{(m)},\sigma^{2(m)}\}_{m=1}^{M}$
\State Set $p=\dim(\theta)$, $q=\dim(\phi)$, $c=2.38^2/(p+q)$, and
$R\gets\operatorname{chol}(cV_0+\epsilon I_{p+q})$
% \State Set $d=\dim(\psi)$, $c=2.38^2/(p+q)$, and
% $R\gets\operatorname{chol}(cV_0+\epsilon I_{p+q})$
\State Sample $\vec\delta^{(1)}$ from
eq.~\eqref{eqn:vectorized_discrepancy_conditional_posterior}, conditional on $\theta^{(1)},\phi^{(1)},\sigma^{2(1)}$.
\For{$m=1,\ldots,M-1$}

\State Propose
$([\theta^{\mathrm{prop}}],[\phi^{\mathrm{prop}}])=
([\theta^{(m)}],[\phi^{(m)}])
+R\psi$,
where $\psi\sim\mathcal{N}_{p+q}(0,I_{p+q})$
\State Get $\vec{z}(t,\theta^{\mathrm{prop}})$, $\mathcal{S}(\vec t, \theta^{\mathrm{prop}})$ by solving coupled ODEs \eqref{eq:states},
\eqref{eq:senses} 
\State Set $\theta^{(m+1)}=\theta^{\mathrm{prop}},\phi^{(m+1)} = \phi^{\mathrm{prop}}$ with probability
$$
\min\left\{
1,
\frac{
\pi(\theta^{\mathrm{prop}},\phi^{\mathrm{prop}}\mid
\vec y,\sigma^{2(m)})
}{
\pi(\theta^{(m)},\phi^{(m)}\mid
\vec y,\sigma^{2(m)})
}
\right\}
$$
\qquad Using eq.~\eqref{eqn:joint_parameter_posterior}.
Otherwise, set $\theta^{(m+1)}=\theta^{(m)},\phi^{(m+1)} = \phi^{(m)}$
\State Sample $\vec\delta^{(m+1)}$ from
eq.~\eqref{eqn:vectorized_discrepancy_conditional_posterior}
conditional on $\theta^{(m+1)}$, $\phi^{(m+1)}$, and $\sigma^{2(m)}$
\State Sample $\sigma^{2(m+1)}$ from
eq.~\eqref{eqn:conjugate_noise_var}
conditional on $\theta^{(m+1)}$ and $\vec\delta^{(m+1)}$
\If{$m$ is a multiple of $k_0$}
\State Set
$$
V_m\gets
\operatorname{Cov}\left(
([\theta^{(1)}],[\phi^{(1)}]),\ldots,([\theta^{(m+1)}],[\phi^{(m+1)}])
\right),
\qquad
R\gets\operatorname{chol}(cV_m+\epsilon I_d)
$$

\EndIf

\EndFor

\end{algorithmic}

}

\end{algorithm}

\par 
The posterior in eq.~\eqref{eqn:joint_parameter_posterior} does not generally admit a closed form. Therefore, $(\theta,\phi)$ are sampled jointly using the adaptive Metropolis algorithm \cite{Smith2024}. Algorithm~\ref{alg:inference_algo} provides the pseudocode for the complete posterior sampling procedure. At each iteration, $(\theta,\phi)$ are updated jointly using an adaptive Gaussian random-walk Metropolis step targeting the marginal posterior in eq.~\eqref{eqn:joint_parameter_posterior}. This involves solving the augmented ODE and sensitivity system for $\theta$, and then updating the discrepancy covariance using $\mathcal{S}$ and the current hyperparameter values $\phi$. Conditional on the accepted parameters, the discrepancy is sampled exactly from eq.~\eqref{eqn:vectorized_discrepancy_conditional_posterior}, and $\sigma^2$ is sampled using eq.~\eqref{eqn:conjugate_noise_var}. The proposal covariance is periodically adapted using the empirical covariance of the sampled $(\theta,\phi)$ values.
%\wc{Should include a sentence or two about line 7 + how the quadraditic form including $C_{\theta,\phi}$ is actually calculated.}

\section{Numerical Experiments}\label{sec:NumericalExperiments}

\subsection{Experiment Design}
\subsubsection{Data Generation}
We examine three underlying ODE models split into five sections. With slight abuse of notation, for each model type we use mathematical notation that is consistent with the modeling community. Thus, there may be overlap in the use of previously defined variables and each set of results should be interpreted within the context of the model domain. Each model is given specific upper and lower parameter bounds, denoted by UB and LB, respectively. For each model, we simulate observed data under observation model \eqref{eqn:noise_inv_problem} where model parameters $\theta$ are sampled from box-uniform priors over the problem specific bounds and model discrepancy is simulated from the MOS prior outlined in Theorem~\ref{thrm:constrained_multivariate_prior}. This ensures that the simulated model parameters remain the $L^2$ minimizer according to Equation~\ref{eqn:multivariate_l2_innter_prod} and are the proper inferential target. The kernel hyperparameters $\phi$ are generated from a problem specific log-normal distribution, where we simultaneously infer $\phi$ in posterior inference to simulate the realistic setting where the discrepancy hyperparameters are unknown. 
% The kernel hyperparameters $\gamma$ and $\ell$ are generated from a problem specific log-normal distribution and not supplied to the posterior inference Algorithm~\ref{alg:inference_algo} to simulate the realistic setting where discrepancy lengthscale and smoothness may be a-priori unknown. 
The measurement error is zero-mean Gaussian with a variance $\sigma_\varepsilon^2$ that is set based on the specific model system. Each experiment was repeated for $100$ Monte-Carlo replications and results combined for evaluation.

%For each dataset, we run a multistart optimization using 50 initializations of the simulator parameters using a Sobol' sequence [XXX]. We use the $L^2$ error as our objective function and select the minizer from these initializations as the ``true $L^2$ minimizer'' for our later analyses. 

\subsubsection{Implementation Details}\label{ssec:implementation_details}
All empirical results in this manuscript are generated using a custom Python implementation built on standard scientific computing libraries, including SciPy, NumPy, PyTorch, and GPyTorch, and made publicly available on Github (\url{https://github.com/mjcolebank/Discrepancy_USC}). Unless otherwise stated, default numerical and inferential settings are used throughout. The ODE system (with the state variables and sensitivites) are solved using SciPy's  \texttt{odeint} integration tool, which uses the LSODA algorithm \cite{LSODA}. Inputs ($t$) correspond to the fixed simulation grid from the ODE solver and are in their physical units. Uniform priors defined using problem specific LB and UB are used for the model parameters. For each proposed parameter value, the simulator is solved on a fixed, prescribed time interval to provide the state and sensitivity outputs. Solving sensitivities directly avoids extra simulations and reduces numerical errors for gradient approximation.  Sensitivity information is used only when building \eqref{eqn:orthogonal_cross_covariance}.  Posterior inference is performed using Algorithm~\ref{alg:inference_algo} for a single chain for 10,000 iterations. All inferences are conducted on the last 5,000 samples from the MCMC procedure, discarding the first  5,000 as burn-in. The initial proposal covariance is defined by the diagonal matrix $V^0=\textrm{diag}(\sigma_\theta)$, where $\sigma_\theta=0.01\left(\textrm{UB}-\textrm{LB}\right)$, and updates are made every $k_0=2,000$ iterations.

\subsubsection{Competing Methods}
We compare our method to a multivariate extension of the Kennedy O'Hagan approach \citep{kennedy2001} (KOH), where the unconstrained discrepancy prior is modeled using a mean-zero independent multivariate Gaussian process \cite{Paun2025}, and a radial basis function (RBF) for each output dimension, resulting in the Kronecker form 
\begin{equation}\label{eqn:discretized_unconstrainted_kernel}
    \mathcal{C}_{\phi}(\vec t,\vec t') = I_J \otimes C_{\phi}\left(\vec t, \vec t'\right) \in \mathbb{R}^{Jn \times Jn}, \ \ \ C_\phi(t_i,t_j)=\phi_1\exp\left( -\frac{(t_i - t_j)^2}{2\phi_2^2} \right).
\end{equation}
The kernel $C_{\phi}(t,t')$ is used in both the KOH and the MOS procedure, the latter of which uses $C_{\phi}(\vec t,\vec t')$ to calculate $C_{\theta,\phi}(\vec t,\vec t')$. Posterior inference using KOH was performed using the same Algorithm~\ref{alg:inference_algo}, with settings kept largely consistent with those outlined in Section~\ref{ssec:implementation_details}, except replacing any computations involving orthogonalized kernel~\eqref{eqn:discretized_constrainted_kernel} with the kernel~\eqref{eqn:discretized_unconstrainted_kernel} for KOH. 

\subsubsection{Evaluation Metrics}
We evaluate the estimation performance of the methods by computing the Monte-Carlo average of the root mean squared error (RMSE) and bias of the posterior mean. For UQ, we use the MCMC samples to form $95\%$ credible intervals and compute the empirical coverage and average interval length over the Monte-Carlo replications. 
\par 
We also evaluate inference on the true process $\xi$ by sampling the push-forward of the posterior parameter and discrepancy samples through the discrepancy adjusted forward model. Estimation performance is quantified by approximating the integrated squared error and integrated bias of the posterior predicted process. The quality of the uncertainty quantification was assessed by constructing $95\%$ point-wise credible intervals over the time grid and computing their empirical coverage and average length using the Monte-Carlo replications.

\subsection{Model 1: Mass-Spring-Damper Model}
For our first example, we consider a coupled mass-spring-damper system given by the second-order, linear matrix differential equation
\begin{align}
        {M} \ddot{x} + {C} \dot{x} + {K} {x} = {F}.   
    \end{align}
Here, we relate the acceleration $\ddot{x}$, velocity $\dot{x}$, and position $x$ of masses that are connected to each other through a spring-dashpot system. Given $n_s$ springs, the above equation has four sets of parameters: a set of $n_s$ mass terms which are described by the diagonal matrix of masses ${M}$, a set of $n_s$ dashpot elements (the dampers) parameterized by the bidiagonal matrix ${C}$, and $n_s$ stiffness parameters given by the bidiagonal matrix ${K}$. The vector ${F}$ includes the nonhomogenous component of the system, e.g., periodic forcing functions. We recast the system of $n_s$ second-order equations as a system of $2n_s$ first-order equations given by
\begin{equation}\label{eq:general_massspring}
    \frac{d}{dt}\begin{bmatrix}
        {x} \\ \dot{x}
    \end{bmatrix}
    =
    \begin{bmatrix}
        {0} & {I} \\
        -{M}^{-1}{K} & -{M}^{-1}{C} \ \ 
    \end{bmatrix} 
    \begin{bmatrix}
        {x} \\ {\dot{x}}
    \end{bmatrix}
    + \begin{bmatrix}
        0 \\ F
    \end{bmatrix}, 
\end{equation}
with known initial conditions $x(0)=x_0$ and $\dot{x}(0)=v_0$.

% which has homogeneous solution
% \begin{equation}
%     {z}(t) = e^{{A}t}{z}(0), \ \ {A}=
%     \begin{bmatrix}
%         {0} & {I} \\
%         -{M}^{-1}{K} & -{M}^{-1}{C} \ \ 
%     \end{bmatrix}
% \end{equation}

Inferring the elements of ${M}$,${C}$, ${K}$, and ${F}$ simultaneously leads to non-identifiability, since the mass terms are not uniquely determined in this system \cite{Smith2024}. Thus, we fix the masses ${M}$ and treat them as known quantities. We also leave the forcing terms ${F}$ fixed at their known value, and focus on inferring the parameters that make up the matrices ${C}$ and ${K}$ representing the dashpot and stiffness elements.

In general, the sensitivity system of eq.~\eqref{eq:general_massspring} can be derived by noting that, since the ODE system is linear in the states, the state-Jacobian is the evolution matrix
\begin{align}\label{eq:spring_Gz}
    \mathcal{J{}}   =
    \begin{bmatrix} 
    0 & I \\
    -M^{-1}K & -M^{-1}C
    \end{bmatrix}
\end{align}
which is true regardless of the number of springs. The term $\mathcal{G}_\theta$ also follows a pattern that is best seen on a case-by-case basis.

We begin with the case where there are $n_s=2$ masses, where the first mass is connected to a fixed surface and the second mass is free. The linear system is given by
\begin{align}
    \mathbf{M}=\begin{bmatrix}
m_1 & 0 \\[4pt]
0 & m_2\\[4pt]
\end{bmatrix}, \ \ \ 
\mathbf{K}=
\begin{bmatrix}
k_1+k_2 & -k_2 \\[4pt]
- k_2 & k_2\\[4pt]
\end{bmatrix}, \ \ \ 
\mathbf{C}=
\begin{bmatrix}
c_1+c_2 & -c_2 \\[4pt]
- c_2 & c_2\\[4pt]
\end{bmatrix}.
\end{align}
Our parameter vector of interest is $\theta = \{k_1,c_1,k_2,c_2\}$. We assume that the forcing function is applied to the second mass only, giving
\begin{equation}
    {F} = \begin{bmatrix}
0 \\[4pt]
F_0\cos(\omega_f \cdot t)\\[4pt]
\end{bmatrix}
\end{equation}
where $F_0$ is the forcing magnitude and $\omega_f$ is the forcing frequency. We fix the external force parameters for all simulations and data at $F_0=25$ and $\omega_f=5$. The corresponding sensitivity system then uses $\mathcal{G}_z=\textrm{diag}\left(\mathcal{J}\right)$ defined in eq.~\eqref{eq:spring_Gz}, while the term $\mathcal{G}_\theta\in\mathbb{R}^{16}$ has the non-trivial elements
\begin{align}
    &\mathcal{G}_{\theta_{3}} = -x_1/m_1,   &\mathcal{G}_{\theta_{7}} = -\dot{x}_1/m_1 & \\
    &\mathcal{G}_{\theta_{11}} = (-x_1+x_2)/m_1,  &\mathcal{G}_{\theta_{12}} = (x_1-x_2)/m_2, & \\
    &\mathcal{G}_{\theta_{15}} = (-\dot{x}_1+\dot{x}_2)/m_1,  &\mathcal{G}_{\theta_{16}} = (\dot{x}_1-\dot{x}_2)/m_2. 
\end{align}
We provide an example forward evaluation of the model and its sensitivities in Figure \ref{fig:Example_TwoSpring}. An averaged ranking of parameter influence, given by the sensitivity system across all 100 datasets, is provided in Figure \ref{fig:TwoSpring_Sensitivity} of the Appendix (Sec. \ref{sec:AppSens}).

\begin{figure}
    \centering
    \includegraphics[width=0.45\linewidth]{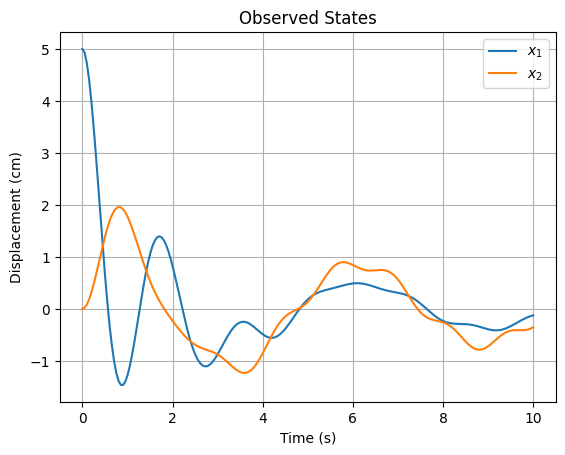}
    \includegraphics[width=0.45\linewidth]{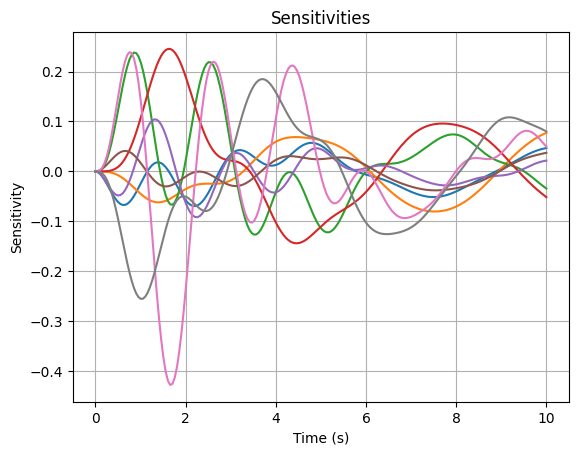}
    \caption{Example realization from the two-mass model. (Left): Example trajectory for the two mass displacements. (Right): The corresponding 8 sensitivities of each model state to the parameters of the system.}
    \label{fig:Example_TwoSpring}
\end{figure}

\subsubsection{Inference Results}
We consider the case where $A$ maps our model states to observations for the two mass displacements, $x_1$ and $x_2$, at 20 equally spaced time points on the interval $\Omega=[0,10]$. Here, our observation operator $A$ ignores the velocity states $\dot{x}_1$ and $\dot{x}_2$. Each dataset is generated by sampling parameters from a uniform distribution, with $k_i\sim \mathcal{U}(1,100)$, $i=1,2$, and $c_i\sim \mathcal{U}(0.1,10)$, $i=1,2$. The initial conditions for the model are $x_1(0)=5.0$ and $x_2=\dot{x}_1=\dot{x}_2=0.0$, with corresponding fixed, known masses of $m_1=5.0$ and $m_2=10.0$. We add independent zero-mean Gaussian noise to each dataset, with a noise variance $\sigma_\varepsilon^2=0.1$. For the model discrepancy, we generate a signal using the MOS framework with hyperparameters $\gamma \sim \textrm{LogNormal}\left(-2.0,1.0^2\right)$ and $\ell \sim \textrm{LogNormal}\left(-0.2,0.8^2\right)$. We use these priors for the GP to ensure the discrepancy magnitude is reasonable relative to the displacement magnitudes and a smaller length-scale to ensure sufficient discrepancy as the spring reaches steady state over the time interval $\Omega$.

Table \ref{tab:posterior_summary_metrics_twospring} provides the average bias, root mean square error (RMSE), 95\% coverage, and interval width over 100 Monte-Carlo experimental replications. In general, our method demonstrates superior parameter estimation compared to KOH, with smaller magnitude bias (except for $c_2$) and a smaller RMSE for all model parameters. For UQ, both methods achieve reasonable calibration, with empirical coverage for most parameters near the $95\%$ nominal level, though both exhibit slight conservatism. The mean interval widths for MOS are smaller than KOH for all model parameters, indicating less uncertainty in the marginal posterior. 
\par 
We also provide predictive error metrics in Table \ref{tab:predictive_metrics_twospring}, as well as one representative example of output uncertainty and marginal posteriors in Figure \ref{fig:TwoSpring_Output_UQ}. Similar patterns emerge, with MOS providing lower MSE and less conservative intervals, and empirical coverage closer to the nominal level than KOH. We observe that KOH does achieve a lower bias, though the bias for both methods is quite small.

\begin{table}[ht]
\centering
\footnotesize
\caption{Posterior summary metrics for the two-mass oscillator across 100 Monte-Carlo experimental replications. Bold values for the bias and RMSE are the best for the given parameter.}
\label{tab:posterior_summary_metrics_twospring}
\begin{tabular}{l|ccccc}
\hline
Method & Parameter & Mean Bias & Mean RMSE & Mean 95\% Coverage & Mean Int. Width \\
\hline
KOH & $k_1$ & 0.5533 & 3.1099 & 0.98 & 25.8976 \\
KOH & $c_1$ & 0.3660 & 0.9100 & 0.96 & 5.6902 \\
KOH & $k_2$ & -0.5922 & 3.7509 & 0.96 & 32.9859 \\
KOH & $c_2$ & \textbf{0.1553} & 0.6997 & 0.95 & 5.4625 \\
\hline
MOS & $k_1$ & \textbf{0.1767} & \textbf{2.5667} & 0.98 & 22.0292 \\
MOS & $c_1$ &\textbf{ 0.2961} & \textbf{0.6945} & 0.96 & 5.0494 \\
MOS & $k_2$ & \textbf{-0.4139} &\textbf{ 3.5067} & 0.96 & 30.1337 \\
MOS & $c_2$ & 0.1639 & \textbf{0.6294} & 0.95 & 5.0156 \\
\hline
\end{tabular}
\end{table}

\begin{table}[htbp]
  \centering
  \caption{Output uncertainty metrics between KOH and our MOS approach for the two-mass oscillator. Error metrics are first time-averaged and then averaged across 100 Monte-Carlo experimental replications. Bold values for the bias and RMSE are the best for the given parameter.}
  \label{tab:predictive_metrics_twospring}
  \begin{tabular}{l|cc}
    \hline\hline
    & \textbf{KOH} & \textbf{MOS} \\
    \cline{2-3}
    \textbf{Metric} & Mean & Mean \\
    \hline
    Time Integrated RMSE           & 0.3133 & \textbf{0.3083} \\
    Time Integrated Bias           & \textbf{-0.0174} & -0.0177 \\
    Time Integrated Coverage       & 0.9895 & 0.9788 \\
    Time Integrated Interval Width & 2.0935 & 1.8589 \\
    \hline\hline
  \end{tabular}
\end{table}

% \begin{table}[htbp]
%   \centering
%   \caption{Output uncertainty metrics between KOH and our MOS approach for the two-mass oscillator. Error metrics are first time-averaged and then averaged across 100 Monte-Carlo experimental replications. \wc{I'd drop medians...}}
%   \label{tab:predictive_metrics_twospring}
%   \begin{tabular}{l|cc|cc}
%     \hline\hline
%     & \multicolumn{2}{c}{\textbf{KOH}} & \multicolumn{2}{c}{\textbf{MOS}} \\
%     \cline{2-3} \cline{4-5}
%     \textbf{Metric} & Mean & Median & Mean & Median \\
%     \hline
%     Time Integrated RMSE           & 0.3133 & \textbf{0.2656} & \textbf{0.3083} & 0.2715 \\
%     Time Integrated Bias           & \textbf{-0.0174} & \textbf{-0.0100} & -0.0177 & -0.0102 \\
%     Time Integrated Coverage       & 0.9895 & 1.0000 & 0.9788 & 1.0000 \\
%     Time Integrated Interval Width & 2.0935 & 1.9280 & 1.8589 & 1.6365 \\
%     \hline\hline
% \end{tabular}
% \end{table}

\begin{figure}
    \centering
    \includegraphics[width=\linewidth]{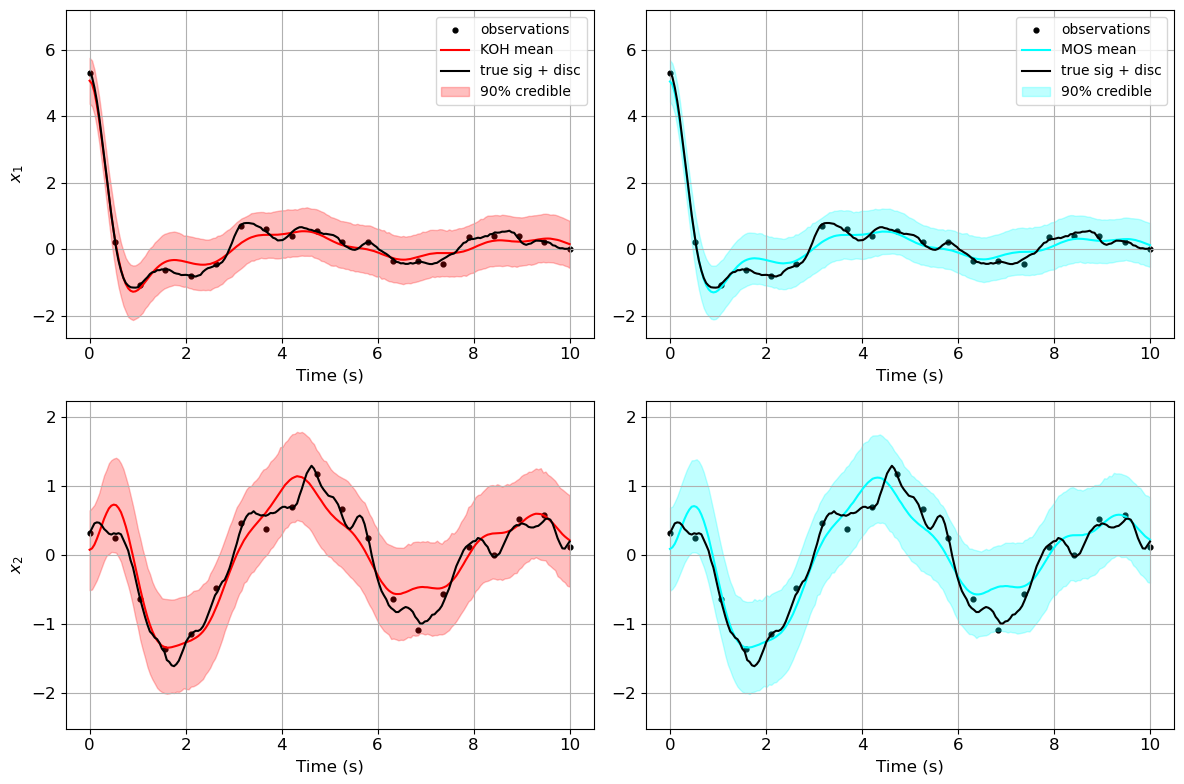}
    \caption{Output UQ for two-mass oscillator using one representative dataset. Posterior samples of the simulator and discrepancy are drawn from either the KOH (left column, red) or MOS (right column, cyan) approach.}
    \label{fig:TwoSpring_Output_UQ}
\end{figure}

\pagebreak

\subsection{Model 2: Four-Mass Oscillator}
We now consider a larger system of $n_s=4$ masses, with
\begin{align}
    \mathbf{M}=\begin{bmatrix}
m_1 & 0 & 0 & 0\\[4pt]
0 & m_2 & 0 & 0\\[4pt]
0 & 0 & m_3 & 0\\[4pt]
0 & 0 & 0 & m_4
\end{bmatrix}, \ \ \ 
\mathbf{K}=
\begin{bmatrix}
k_1+k_2 & -k_2 & 0 & 0\\[6pt]
- k_2 & k_2+k_3 & -k_3 & 0\\[6pt]
0 & -k_3 & k_3+k_4  & -k_4\\[6pt]
0 & 0 & -k_4 & k_4
\end{bmatrix}, \ \
\end{align}
\begin{align}
\mathbf{C}=
\begin{bmatrix}
c_1+c_2 & -c_2 & 0\\[6pt]
- c_2 & c_2+c_3 & -c_3\\[6pt]
0 & - c_3 & c_3+c_4 & -c_4\\[6pt]
0 & 0 & -c_4 & c_4
\end{bmatrix}.
\end{align}
We can recast the system of $4$ second-order equations as a system of $8$ first-order equations as we did before, where the forcing function ${F}$ is only applied to the last mass $m_4$ as before. The model parameters are now ${\theta}=\left\{k_1,c_1,k_2,c_2,k_3,c_3,k_4,c_4\right\}\in\mathbb{R}^8$. The sensitivity system requires $\mathcal{G}_\theta$ and $\mathcal{G}_z$ as before, where $\mathcal{G}_z$ is structured identically to eq.~\eqref{eq:spring_Gz}. The term $\mathcal{G}_\theta\in\mathbb{R}^{64}$ has the non-trivial elements
\begin{align}
    &\mathcal{G}_{\theta_{5}} = -x_1/m_1,   &\mathcal{G}_{\theta_{13}} = -\dot{x}_1/m_1 & \\
    &\mathcal{G}_{\theta_{21}} = (-x_1+x_2)/m_1,  &\mathcal{G}_{\theta_{22}} = (x_1-x_2)/m_2, & \\
    &\mathcal{G}_{\theta_{29}} = (-\dot{x}_1+\dot{x}_2)/m_1,  &\mathcal{G}_{\theta_{30}} = (\dot{x}_1-\dot{x}_2)/m_2, & \\
    &\mathcal{G}_{\theta_{38}} = (-x_2+x_3)/m_2,  &\mathcal{G}_{\theta_{39}} = (x_2-x_3)/m_3, & \\
    &\mathcal{G}_{\theta_{46}} = (-\dot{x}_2+\dot{x}_3)/m_2,  &\mathcal{G}_{\theta_{47}} = (\dot{x}_2-\dot{x}_3)/m_3, & \\
    &\mathcal{G}_{\theta_{55}} = (-x_3+x_4)/m_3,  &\mathcal{G}_{\theta_{56}} = (x_3-x_4)/m_4, & \\
    &\mathcal{G}_{\theta_{63}} = (-\dot{x}_3+\dot{x}_4)/m_3,  &\mathcal{G}_{\theta_{64}} = (\dot{x}_3-\dot{x}_4)/m_4, & \\
\end{align} We show an example simulation of the displacement and velocities from the model in Figure \ref{fig:Example_FourSpring}. An averaged ranking of parameter influence, given by the sensitivity system across all 100 datasets, is provided in Figure \ref{fig:FourSpring_Sensitivity} of the Appendix (Sec. \ref{sec:AppSens}).

\begin{figure}
    \centering
    \includegraphics[width=0.45\linewidth]{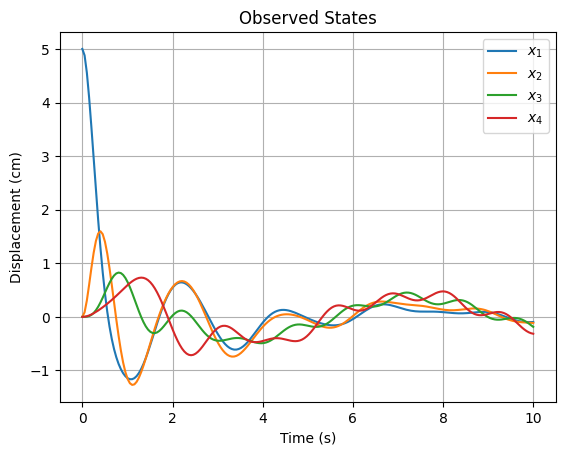}
    \includegraphics[width=0.45\linewidth]{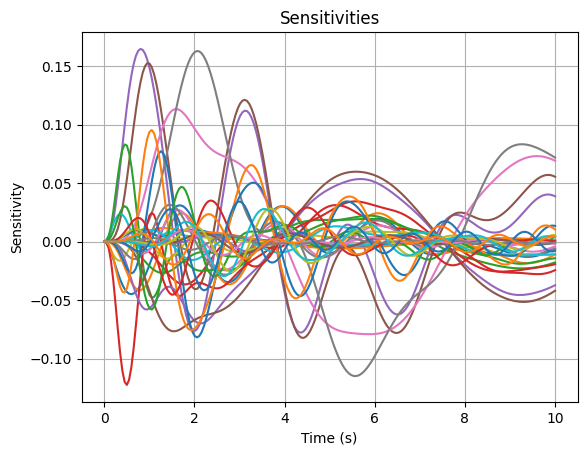}
    \caption{Example realization from the four-mass model. (Left): Example trajectory for the four mass displacements. (Right): The corresponding 32 sensitivities of each model state to the parameters of the system}
    \label{fig:Example_FourSpring}
\end{figure}

\subsubsection{Inference Results}
We again impose the measurement operator $A$ that maps the full displacement space for the model, which includes $x_1,x_2,x_3,$ and $x_4$. We again use 20 equally spaced time points on the interval $\Omega=[0,10]$. Each dataset is generated by sampling parameters from a uniform distribution, again using $k_i\sim \mathcal{U}(1,100)$, $i=1,2,3,4$, and $c_i\sim \mathcal{U}(0.1,10)$, $i=1,2,3,4$ as in the case of the two-mass system. We fixed the external force parameters for all simulations and data at $F_0=25$ and $\omega_f=5$. The initial conditions for the model are $x_1(0)=5.0$ and $x_2=x_3=x_4=0.0$, as well as $\dot{x}_1=\dot{x}_2=\dot{x}_3=\dot{x}_4=0.0$. The masses are fixed at $m_1=5.0, m_2=4.0,m_3=7.0$, and $m_4=10$. We add independent zero-mean Gaussian noise to each dataset, with a noise variance $\sigma_\varepsilon^2=0.1$. We use the same priors for the model discrepancy, with $\gamma \sim \textrm{LogNormal}\left(-2.0,1.0^2\right)$ and $\ell \sim \textrm{LogNormal}\left(-0.2,0.8^2\right)$, as was the case in the two-mass example.

Parameter inference results comparing KOH and the MOS approach can be found in Table \ref{tab:four_springs_summary_metrics}. As with the two-mass example, we find that our MOS approach achieves a smaller mean RMSE for nearly all parameters except for $c_4$ where KOH out performs, while the bias results are less consistently in favor of either approach. Both methods demonstrate empirical coverage that is quite similar across all of the parameters, including some severe under-coverage for several parameters, including $c_4,c_3$ and $k_3$. We attribute this to the relatively small influence on the model behavior, as the model outputs are less sensitive to these parameters (see Figure \ref{fig:FourSpring_Sensitivity} of the Appendix (Sec. \ref{sec:AppSens})).
\par 
We provide an example of the output uncertainty and posterior predictive in Figure \ref{fig:FourSpring_Output_UQ}. Error metrics in output space are provided in Table \ref{tab:FourSpring_Output_comparison}. The proposed MOS method results in lower MSE, empirical coverage closer to the nominal level, and smaller average intervals, though improvements are somewhat modest.

\begin{table}[ht]
\centering
\footnotesize
\caption{Posterior summary metrics for the four-mass oscillator across 100 Monte-Carlo experimental replications. Bold values for the bias and RMSE are the best for the given parameter.}
\label{tab:four_springs_summary_metrics}
\begin{tabular}{l|ccccc}
\hline
Method & Parameter & Mean Bias & Mean RMSE & Mean Coverage & Mean Int. Width \\
\hline
KOH & $k_1$ & \textbf{0.5984} & 5.3948 & 0.96 & 36.6688 \\
KOH & $c_1$ & 0.3017 & 1.1947 & 0.92 & 6.7745 \\
KOH & $k_2$ & -1.2837 & 8.8567 & 0.96 & 54.2071 \\
KOH & $c_2$ & \textbf{0.0739} & 1.3538 & 0.91 & 6.9601 \\
KOH & $k_3$ & \textbf{-2.8743} & 12.4060 & 0.88 & 63.0391 \\
KOH & $c_3$ & \textbf{0.6001} & 2.1844 & 0.86 & 8.5545 \\
KOH & $k_4$ & 2.8305 & 13.0305 & 0.96 & 68.8297 \\
KOH & $c_4$ & 0.5589 & \textbf{2.1696} & 0.78 & 8.1605 \\
\hline
MOS & $k_1$ & 0.6039 & \textbf{5.2327} & 0.96 & 34.0988 \\
MOS & $c_1$ & \textbf{0.2574} & \textbf{1.0994} & 0.94 & 6.4482 \\
MOS & $k_2$ & \textbf{-1.1730} & \textbf{8.0224} & 0.96 & 51.8563 \\
MOS & $c_2$ & 0.1166 & \textbf{1.2479} & 0.91 & 6.6844 \\
MOS & $k_3$ & -3.0939 & \textbf{11.8767} & 0.88 & 61.6692 \\
MOS & $c_3$ & 0.6302 & \textbf{2.0900 }& 0.85 & 8.4068 \\
MOS & $k_4$ & \textbf{2.8293} & \textbf{12.1778} & 0.97 & 66.0800 \\
MOS & $c_4$ &\textbf{ 0.5555} & 2.1859 & 0.80 & 8.1247 \\
\hline
\end{tabular}
\end{table}

\begin{table}[htbp]
  \centering
  \caption{Output uncertainty metrics between KOH and our MOS approach for the four-mass oscillator. Error metrics are first time-averaged and then averaged across 100 Monte-Carlo experimental replications. Bold values for the bias and RMSE are the best for the given parameter.}
  \label{tab:FourSpring_Output_comparison}
  \begin{tabular}{l|cc}
    \hline\hline
    & \textbf{KOH} & \textbf{MOS} \\
    \cline{2-3}
    \textbf{Metric} & Mean & Mean \\
    \hline
    Time Integrated RMSE           & 0.3025 & \textbf{0.2985} \\
    Time Integrated Bias           & -0.0060 & \textbf{-0.0060} \\
    Time Integrated Coverage       & 0.9715 & 0.9680 \\
    Time Integrated Interval Width & 1.7748 & 1.6764 \\
    \hline\hline
  \end{tabular}
\end{table}

% \begin{table}[htbp]
%   \centering
%   \caption{Output uncertainty metrics between KOH and Orthogonal approach for the four-mass oscillator. Error metrics are first time-averaged and then averaged across 100 Monte-Carlo experimental replications.}
%   \label{tab:FourSpring_Output_comparison}
%   \begin{tabular}{l|cc|cc}
%     \hline\hline
%     & \multicolumn{2}{c}{\textbf{KOH}} & \multicolumn{2}{c}{\textbf{MOS}} \\
%     \cline{2-3} \cline{4-5}
%     \textbf{Metric} & Mean & Median & Mean & Median \\
%     \hline
%     Time Integrated RMSE           & 0.3025 & 0.2651 & \textbf{0.2985} & \textbf{0.2641} \\
%     Time Integrated Bias           & -0.0060 & \textbf{-0.0026} & \textbf{-0.0060} & -0.0032 \\
%     Time Integrated Coverage       & 0.9715 & 1.0000 & 0.9680 & 1.0000 \\
%     Time Integrated Interval Width & 1.7748 & 1.6814 & 1.6764 & 1.5689 \\
%     \hline\hline
% \end{tabular}
% \end{table}

\begin{figure}
    \centering
    \includegraphics[width=\linewidth]{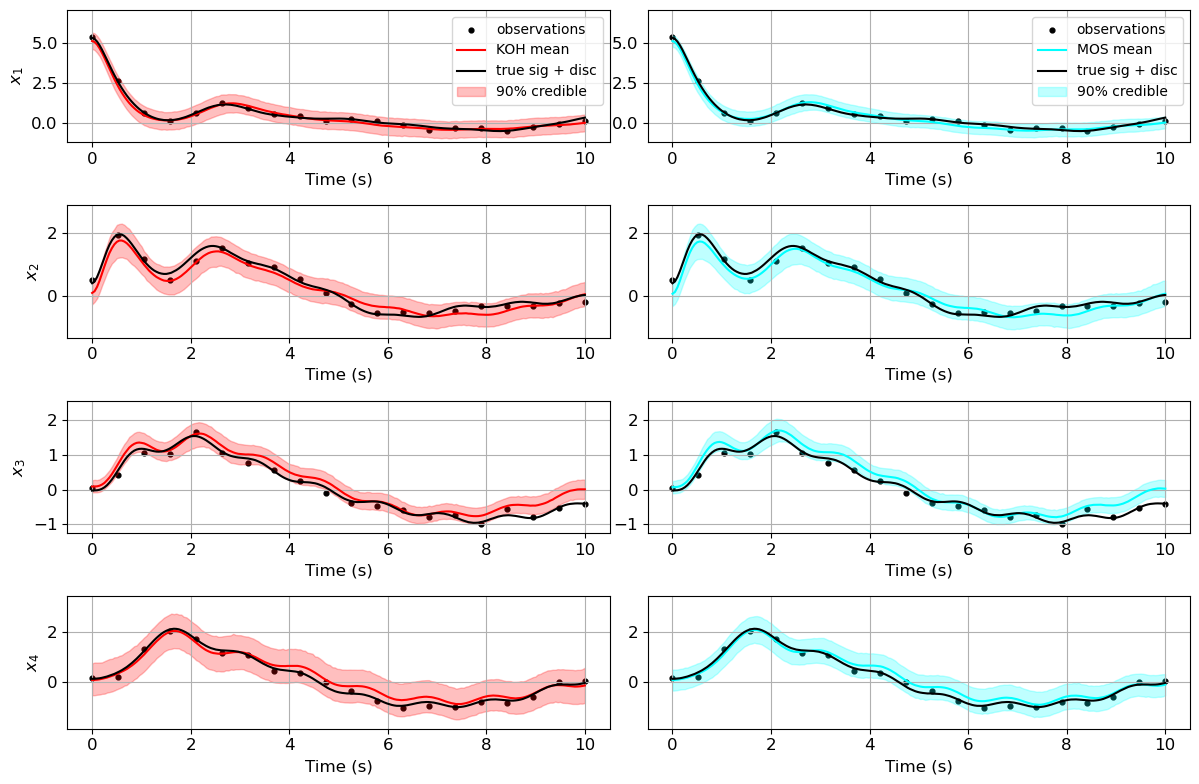}   

    \caption{Output UQ for four-mass oscillator using one representative dataset. Posterior samples of the simulator and discrepancy are drawn from either the KOH (left column, red) or MOS (right column, cyan) approach.}
    \label{fig:FourSpring_Output_UQ}
\end{figure}

\pagebreak

\subsection{Model 3: The FitzHugh-Nagumo model}

In this section, we consider the FitzHugh–Nagumo model, which is an oscillator-like equation that can describe the dynamics of excitatory and inhibitory neurons. The system of equations is given by
\begin{align}
    \frac{dv}{dt} &= v - \frac{v^3}{3} - w + I \\
    \frac{dw}{dt} &= (v+a-bw)\frac{1}{\tau}
\end{align}
where $v$ and $w$ represent the excitatory and inhibitory signals, respectively. The parameters are $a$ and $b$, which control the threshold for excitatory spiking and the resiting potential, an external applied current $I$, and the time-scale modulation $\tau$, which controls how the inhibitory signal acts relative to the excitatory signal. Thus, our parameters for inference are $\theta=\left\{a,b,I,\tau\right\}$. The sensitivity system is then controlled by $\mathcal{G}_z = \textrm{diag}\left(\mathcal{J}\right)$, where
\begin{equation}
    \mathcal{J} = \begin{bmatrix}
        1 - v^2 & -1 \\
        \displaystyle\frac{1}{\tau} & \displaystyle\frac{-b}{\tau}
    \end{bmatrix}
\end{equation}
and the vector $\mathcal{G}_\theta\in\mathbb{R}^8$ which has four non-trivial entries
\begin{align}
    &\mathcal{G}_{\theta_{3}} = 1,   &\mathcal{G}_{\theta_{5}} &= \frac{1}{\tau}, & \\
    &\mathcal{G}_{\theta_{6}} =  \frac{-1}{\tau},  &\mathcal{G}_{\theta_{8}} &= \frac{-(v+a-bw)}{\tau^2}.
\end{align}
Figure \ref{fig:Example_Fitz} shows a sample trajectory and the corresponding sensitivities for the system. An averaged ranking of parameter influence, given by the sensitivity system across all 100 datasets, is provided in Figure \ref{fig:Fitz_Sensitivity} of the Appendix (Sec. \ref{sec:AppSens}).

\begin{figure}[ht!]
    \centering
    \includegraphics[width=0.45\linewidth]{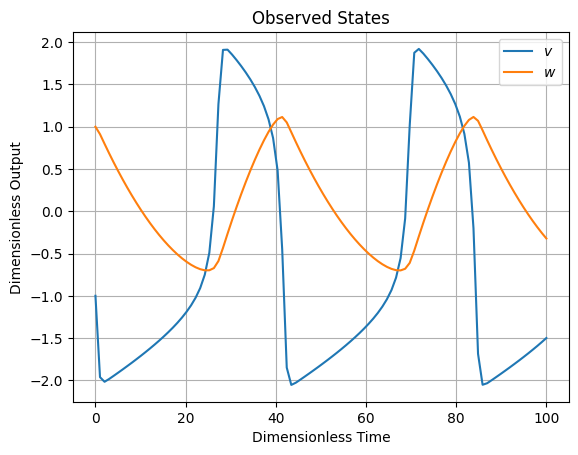}
    \includegraphics[width=0.45\linewidth]{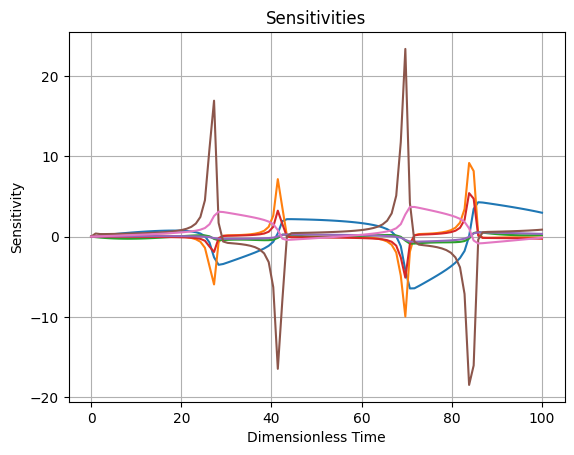}
    \caption{Example realization from the FitzHugh-Nagumo model. (Left) Example trajectory of the nondimensional $v$ and $w$ state variables. (Right) Corresponding $8$ dimensional sensitivities.}
    \label{fig:Example_Fitz}
\end{figure}

\subsubsection{Inference Results}
We let our measurement operator $A$ map the full state space for the model, which includes the nondimensional states $v$ and $w$. We use 20 equally spaced time points on the interval $\Omega=[0,100]$. Each dataset is generated by sampling parameters from uniform distributions, with $a\sim \mathcal{U}(0.1,1)$, $b\sim \mathcal{U}(0.1,10)$, $I\sim \mathcal{U}(0.1,1)$, and $\tau \sim \mathcal{U}(0.04,0.1)$. The initial conditions for the model are $v(0)=-1.0$ and $w(0)=1.0$. We add independent zero-mean Gaussian noise to each dataset, with a noise variance $\sigma_\varepsilon^2=0.1$. For the model discrepancy, we generate a signal using the MOS framework with hyperparameters $\gamma \sim \textrm{LogNormal}\left(-1.6,1.0^2\right)$ and $\ell \sim \textrm{LogNormal}\left(3,0.4^2\right)$. We use these priors for the GP to ensure a relatively small discrepancy magnitude (since the states are non-dimensional) and a somewhat large length-scale to match the time domain $\Omega$.

Results comparing KOH and the orthogonal MOS approach can be found in Table \ref{tab:fitz_summary_metrics}. For parameter estimation, results echo previous models with uniformly lower MSE for MOS compared to KOH. The empirical coverage is similar for both methods, and severely undercover for the parameter $b$, while attaining reasonable coverage for the remaining parameter. The mean interval width for MOS is again uniformly lower for all parameters. We note that the average sensitivity of the model to $b$ and $I$ is much lower than $\tau$ and $a$ parameters (see Figure~\ref{fig:Fitz_Sensitivity} of the Appendix (Sec. \ref{sec:AppSens}).
 
Figure \ref{fig:Fitz_Output_UQ} shows example of the predictive posteriors in the output-space, with error metrics and UQ in Table \ref{tab:Fitz_Output_comparison}. MOS demonstrates lower MSE and tighter posterior intervals than KOH, however, both method produce intervals that are overly conservative, with empirical coverage of nominal $95\%$ intervals near $100\%$.

\begin{table}[ht]
\centering
\footnotesize
\caption{Posterior summary metrics for the FitzHugh-Nagumo model across 100 Monte-Carlo experimental replications. Bold values for the bias and RMSE are the best for the given parameter.}
\label{tab:fitz_summary_metrics}
\begin{tabular}{llrrrr}
\hline
Method & Parameter & Mean Bias & Mean RMSE & Mean Coverage & Mean Int. Width \\
\hline
KOH & $a$ & \textbf{-0.0025} & 0.0833 & 0.92 & 0.4795 \\
KOH & $b$ & 0.0059 & 0.1541 & 0.83 & 0.6305 \\
KOH & $I$ & 0.2772 & 1.1208 & 0.90 & 4.9521 \\
KOH & $\tau$ & \textbf{0.0137} & 0.0838 & 0.95 & 0.4623 \\
\hline
MOS & $a$ & -0.0037 & \textbf{0.0655} & 0.91 & 0.3876 \\
MOS & $b$ & \textbf{0.0021} & \textbf{0.1443 }& 0.84 & 0.5883 \\
MOS & $I$ & \textbf{0.2370} & \textbf{1.0393} & 0.90 & 4.3114 \\
MOS & $\tau$ & 0.0143 & \textbf{0.0424} & 0.97 & 0.3238 \\
\hline
\end{tabular}
\end{table}

\begin{table}[htbp]
  \centering
  \caption{Output uncertainty metrics between KOH and our MOS approach for the FitzHugh-Nagumo model. Error metrics are first time-averaged and then averaged across 100 Monte-Carlo experimental replications. Bold values for the bias and RMSE are the best for the given parameter.}
  \label{tab:Fitz_Output_comparison}
  \begin{tabular}{l|cc}
    \hline\hline
    & \textbf{KOH} & \textbf{MOS} \\
    \cline{2-3}
    \textbf{Metric} & Mean & Mean \\
    \hline
    Time Integrated RMSE           & 0.2952 & \textbf{0.2887} \\
    Time Integrated Bias           & 0.0056 & \textbf{0.0052} \\
    Time Integrated Coverage       & 0.9999 & 1.0000 \\
    Time Integrated Interval Width & 3.2241 & 2.6823 \\
    \hline\hline
  \end{tabular}
\end{table}

% \begin{table}[htbp]
%   \centering
%   \caption{Output uncertainty metrics between KOH and Orthogonal approach for the FitzHugh-Nagumo model. Error metrics are first time-averaged and then averaged across 100 Monte-Carlo experimental replications.}
%   \label{tab:Fitz_Output_comparison}
%   \begin{tabular}{l|cc|cc}
%     \hline\hline
%     & \multicolumn{2}{c}{\textbf{KOH}} & \multicolumn{2}{c}{\textbf{MOS}} \\
%     \cline{2-3} \cline{4-5}
%     \textbf{Metric} & Mean & Median & Mean & Median \\
%     \hline
%     Time Integrated RMSE           &  0.2952 &  0.2304 &  \textbf{0.2887} &  \textbf{0.2239} \\
%     Time Integrated Bias           &  0.0056 & -0.0034 &  \textbf{0.0052} &  \textbf{-0.0004} \\
%     Time Integrated Coverage       &  0.9999 &  1.0000 &  1.0000 &  1.0000 \\
%     Time Integrated Interval Width &  3.2241 &  2.7405 &  2.6823 &  2.2609 \\
%     \hline\hline
% \end{tabular}
% \end{table}

\begin{figure}
    \centering
    \includegraphics[width=\linewidth]{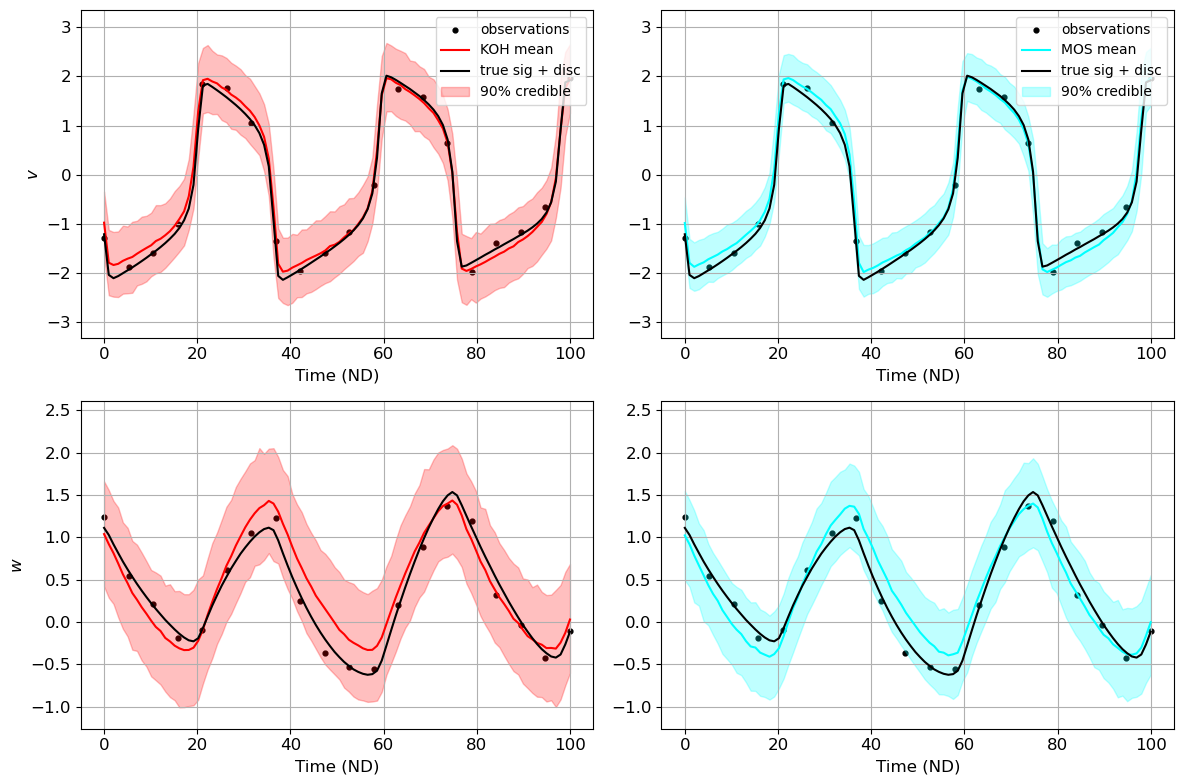}        
    \caption{Output UQ for FitzHugh-Nagumo model using one representative dataset. Posterior samples of the simulator and discrepancy are drawn from either the KOH (left column, red) or MOS (right column, cyan) approach.}
    \label{fig:Fitz_Output_UQ}
\end{figure}

\newpage

\subsection{Model 4: The susceptible-infected-recovered-vaccinated (SIRV) disease model}
The susceptible-infected-recovered-vaccinated (SIRV) model is a system of nonlinear ODEs given by
\begin{equation}\label{eqn:SIRV}
\begin{aligned}
    \frac{dS}{dt} &= -\alpha SI/N-vS+{w_r}R+w_{v}V\\
    \frac{dI}{dt} &=  \alpha SI/N - \mu I  \\
    \frac{dR}{dt} &= \mu I - w_r R\\
    \frac{dV}{dt} &= vS - w_v V
\end{aligned}
\end{equation}
where $\alpha$ is the infection rate, $v$ is the vaccination rate, $\mu$ represents the transition from infected to recovered, $w_{r}$ is the rate of diminished immunity that leads to being susceptible again, and $w_{v}$ reflects the diminished effects of vaccination. The term $N$ is the total population size, where we assume $N=S+I+V+R$ is a constant. The state vector $(S(t),I(t),R(t),V(t))^{\intercal}$ denotes the susceptible, infected and recovered populations at time $t$, respectively.  Thus, the full parameter set is $\theta=\left\{\alpha,v,\mu,w_r,w_v\right\}$. These models have been used extensively to study the spread of pathogens such as measles, dengue virus, and, more recently, COVID-19 \citep{ElHajji2022,Hollingsworth2020,Wang2026}.

The sensitivity system of the SIRV model can be derived with knowledge of $\mathcal{J}$ and $\mathcal{G}_\theta$. The Jacobian, $\mathcal{J}$ is
\begin{align}
    \mathcal{J}   =
    \begin{bmatrix} 
    -\alpha I/N - v & -\alpha S/{N} & w_r & w_v \\[6pt]
\alpha I/{N} & \alpha S/{N} - \mu & 0 & 0 \\[6pt]
0 & \mu & -w_r & 0 \\[6pt]
v & 0 & 0 & -w_v
    \end{bmatrix}
\end{align}
and $\mathcal{G}_\theta\in\mathbb{R}^{20}$ has nonzero entries
\begin{align}
    \mathcal{G}_{\theta_1} = -SI/{N}, \ \  \mathcal{G}_{\theta_2} = -S, \ \  
    \mathcal{G}_{\theta_4} = R, \ \ 
    \mathcal{G}_{\theta_5} = V, \ \
    \mathcal{G}_{\theta_6} = SI/N, \nonumber \\ \nonumber \\
    \mathcal{G}_{\theta_8} = -I, \ \ 
    \mathcal{G}_{\theta_{13}} = I, \ \ 
    \mathcal{G}_{\theta_{14}} = -R, \ \
    \mathcal{G}_{\theta_{17}} = S, \ \ 
    \mathcal{G}_{\theta_{20}} = -V. 
\end{align}

For this model, we consider two different observation operators. The first observation operator, $A_{IRV}$ returns the $I,R,V$ states, while the second observation operator, $A_{IV}$, returns only the $I$ and $V$ states of the model. These are defined as the matrices
$$
A_{IRV} = \begin{pmatrix}
    0 & 0 & 0 & 0 \\
    0 & 1 & 0 & 0 \\
    0 & 0 & 1 & 0 \\
    0 & 0 & 0 & 1 \\
\end{pmatrix}
\ \ \ \ \ 
A_{IV} = \begin{pmatrix}
    0 & 0 & 0 & 0 \\
    0 & 1 & 0 & 0 \\
    0 & 0 & 0 & 0 \\
    0 & 0 & 0 & 1 \\
\end{pmatrix}.
$$
We note that this observation operator is consistent with previous studies using such an epidemic model \cite{Tavener2011,Wang2026}, as susceptible individuals ($S$) are typically difficult to obtain for model calibration. We also scaled predictions by the total population $N$ so that observations represent percentages. An averaged ranking of parameter influence, given by the sensitivity system across all 100 datasets, is provided in Figure \ref{fig:SIRV_Sensitivity} of the Appendix (Sec. \ref{sec:AppSens}).

% \begin{equation*}
%     \mathcal{G}_\theta =  \begin{bmatrix}
%        -\dfrac{z_1 z_2}{N} \\ -z_1 \\ 0 \\ z_3 \\ z_4 \\ \dfrac{z_1 z_2}{N} \\ 0 \\ -z_2 \\ 0 \\ 0 \\ 0 \\ 0 \\ z_2 \\ -z_3 \\ 0 \\ 0 \\ z_1 \\ 0 \\ 0 \\ -z_4 \end{bmatrix},
% \end{equation*}

\begin{figure}
    \centering
    \includegraphics[width=0.45\linewidth]{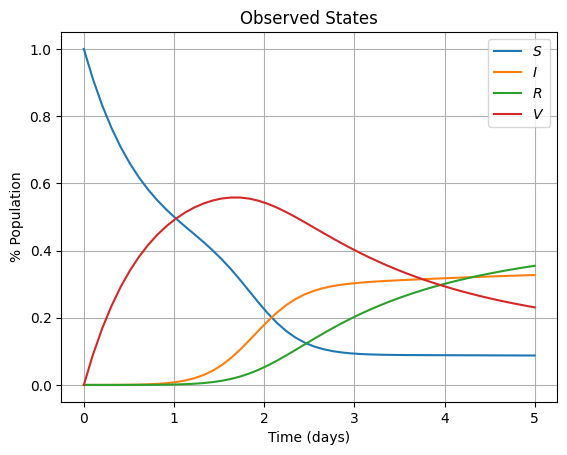}
    \includegraphics[width=0.45\linewidth]{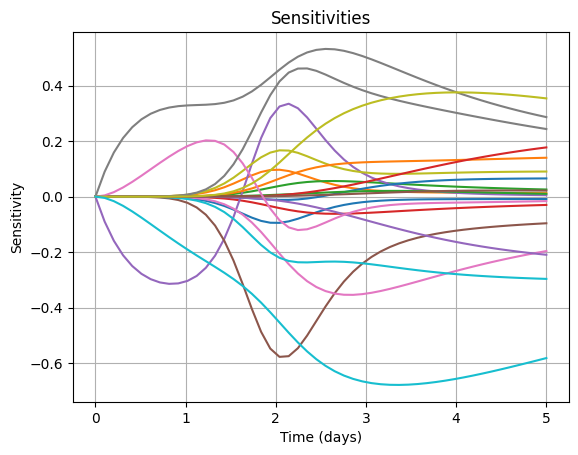}
    \caption{Example realization from the $SIRV$ model. (Left): Example trajectory for the $S,I,R,$ and $V$ states. (Right): The corresponding 30 sensitivities of each model state to the parameters of the system}
    \label{fig:Example_SIRV}
\end{figure}

\subsubsection{Inference Results}
We use 20 equally spaced time points on the interval $\Omega=[0,5]$, representing five days. Each dataset is generated by sampling parameters from uniform distributions, with $a\sim \mathcal{U}(8,16)$, $v\sim \mathcal{U}(0.1,1)$, $\mu\sim \mathcal{U}(0.1,1)$, $w_r\sim \mathcal{U}(0.1,1)$, and $w_v \sim \mathcal{U}(0.1,1)$. The initial conditions for the model are $S(0)=0.99999$, $I=0.00001$, and $V(0)=R(0)=0$. We add independent zero-mean Gaussian noise to each dataset, with a noise variance $\sigma_\varepsilon^2=0.001$. For the model discrepancy, we generate a signal using the MOS framework with hyperparameters $\gamma \sim \textrm{LogNormal}\left(-7,1.0^2\right)$ and $\ell \sim \textrm{LogNormal}\left(-0.5,0.8^2\right)$. We use these priors for the GP to ensure a relatively small discrepancy magnitude (since the states are scaled by the population size) and a somewhat large length-scale that allows for oscillations slowly across the first few days of the simulation.

Results comparing KOH and the orthogonal MOS approach for the operator $A_{IRV}$ can be found in Table \ref{tab:sirv_irv_summary_metrics}. Parameter estimation results are similar to previous models with consistently lower bias and RMSE for MOS compared to KOH. The MOS approach also leads to slightly lower mean interval widths and improved mean coverage for a non-influential parameters $w_r$ (see sensitivities in Figure \ref{fig:SIRV_Sensitivity} of the Appendix (Sec. \ref{sec:AppSens})).

Figure \ref{fig:IRV_Output_UQ} shows example of the predictive posteriors in the output-space, with error metrics and uncertainty quantification in Table \ref{tab:irv_output_metrics}. MOS demonstrates lower RMSE and tighter posterior intervals than KOH, but the time integrated bias is slightly better for KOH. The KOH method produce intervals that are overly conservative, with empirical coverage of nominal $95\%$ intervals near $100\%$. The average emperical coverage for the MOS approach is approximately 95\%.

\begin{table}[ht]
\centering
\footnotesize
\caption{Posterior summary metrics for the $SIRV$ model across 100 Monte-Carlo experimental replications using the $A_{IRV}$ operator. Bold values for the bias and RMSE are the best for the given parameter.}
\label{tab:sirv_irv_summary_metrics}
\begin{tabular}{llrrrr}
\hline
Method & Parameter & Mean Bias & Mean RMSE & Mean Coverage & Mean Int. Width \\
\hline
KOH & $a$ & 0.0410 & 0.2044 & 0.99 & 2.1091 \\
KOH & $v$ & 0.0120 & 0.0310 & 0.97 & 0.2874 \\
KOH & $\mu$ & 0.0010 & 0.0486 & 0.96 & 0.3133 \\
KOH & $w_r$ & 0.0070 & 0.1124 & 0.88 & 0.5384 \\
KOH & $w_v$ & 0.0260 & 0.0551 & 0.97 & 0.3987 \\
\hline
MOS & $a$ & \textbf{0.0291} & \textbf{0.1324} & 0.99 & 1.5453 \\
MOS & $v$ & \textbf{0.0069} & \textbf{0.0187} & 0.97 & 0.2000 \\
MOS & $\mu$ & \textbf{-0.0007} & \textbf{0.0316} & 0.96 & 0.2169 \\
MOS & $w_r$ & \textbf{0.0061} & \textbf{0.0771} & 0.89 & 0.4221 \\
MOS & $w_v$ & \textbf{0.0185} & \textbf{0.0352} & 0.97 & 0.2760 \\
\hline
\end{tabular}
\end{table}

\begin{table}[htbp]
  \centering
  \caption{Output uncertainty metrics between KOH and our MOS approach for the $SIRV$ model using the $A_{IRV}$ operator. Error metrics are first time-averaged and then averaged across 100 Monte-Carlo experimental replications. Bold values for the bias and RMSE are the best for the given parameter.}
  \label{tab:irv_output_metrics}
  \begin{tabular}{l|cc}
    \hline\hline
    & \textbf{KOH} & \textbf{MOS} \\
    \cline{2-3}
    \textbf{Metric} & Mean & Mean \\
    \hline
    Time Integrated RMSE           & 0.0194 & \textbf{0.0187} \\
    Time Integrated Bias           & \textbf{-0.0004} & -0.0005 \\
    Time Integrated Coverage       & 0.9888 & 0.9547 \\
    Time Integrated Interval Width & 0.1657 & 0.1082 \\
    \hline\hline
  \end{tabular}
\end{table}

% \begin{table}[htbp]
%   \centering
%   \caption{Output uncertainty metrics between KOH and Orthogonal approach for the $SIRV$ model with observations of the $I,R,$ and $V$ states. Error metrics are first time-averaged and then averaged across 100 Monte-Carlo experimental replications.}
%   \label{tab:irv_output_metrics}
%   \begin{tabular}{l|cc|cc}
%     \hline\hline
%     & \multicolumn{2}{c}{\textbf{KOH}} & \multicolumn{2}{c}{\textbf{MOS}} \\
%     \cline{2-3} \cline{4-5}
%     \textbf{Metric} & Mean & Median & Mean & Median \\
%     \hline
%     Time Integrated RMSE           & 0.0194 & 0.0158 & \textbf{0.0187} & \textbf{0.0152} \\
%     Time Integrated Bias           & \textbf{-0.0004} & \textbf{-0.0000} & -0.0005 & 0.0002 \\
%     Time Integrated Coverage       & 0.9888 & 1.0000 & 0.9547 & 1.0000 \\
%     Time Integrated Interval Width & 0.1657 & 0.1397 & 0.1082 & 0.0909 \\
%     \hline\hline
% \end{tabular}
% \end{table}

\begin{figure}
    \centering
    \includegraphics[width=\linewidth]{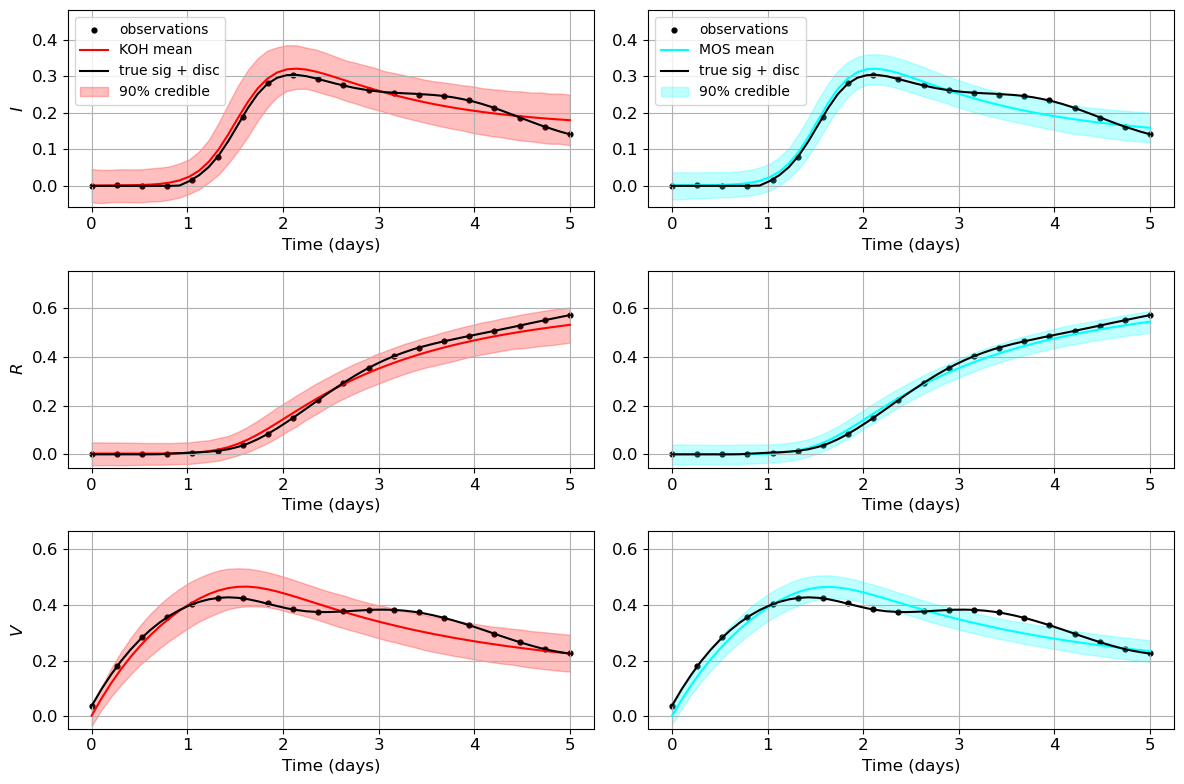}     
     \caption{Output UQ for $SIRV$ model with the $A_{IRV}$ observation operator using one representative dataset. Posterior samples of the simulator and discrepancy are drawn from either the KOH (left column, red) or MOS (right column, cyan) approach.}
    \label{fig:IRV_Output_UQ}
\end{figure}

We generate data in an identical fashion to the previous section, with the same parameter and GP discrepancy priors as before, but using the observation operator $A_{IV}$. Results for this design can be found in Table \ref{tab:sirv_iv_summary_metrics}. Parameter estimation results are similar to when the observation operator $A_{IRV}$ is used, but with KOH providing a slightly better mean bias in the parameter $a$ than the MOS method. We find that this observation operator, $A_{IV}$, leads to a lower RMSE in general for the estimation of $a$, but also leads to a reduction in coverage for the parameter $w_r$. We see in general that again the MOS approach provides nearly similar or slightly worse coverage but consistently leads to smaller interval widths and reduced uncertainty.

Figure \ref{fig:IV_Output_UQ} shows example of the predictive posteriors in the output-space, with error metrics and uncertainty quantification in Table \ref{tab:iv_output_metrics}. MOS demonstrates lower RMSE and tighter posterior intervals than KOH, but the time integrated bias is slightly better for KOH. The KOH method produce intervals that are overly conservative, with empirical coverage of nominal $95\%$ intervals near $100\%$. The average emperical coverage for the MOS approach is approximately 95\%. We find again that output uncertainty is reduced in the MOS approach, and that this method tends to find output-specific uncertainty, whereas KOH tends to apply the same amount of uncertainty across state variables.

\begin{table}[ht]
\centering
\footnotesize
\caption{Posterior summary metrics for the $SIRV$ model across 100 Monte-Carlo experimental replications using the $A_{IV}$ operator. Bold values for the bias and RMSE are the best for the given parameter.}
\label{tab:sirv_iv_summary_metrics}
\begin{tabular}{llrrrr}
\hline
Method & Parameter & Mean Bias & Mean RMSE & Mean Coverage & Mean Int. Width \\
\hline
KOH & $a$ & \textbf{0.0108} & 0.1770 & 0.97 & 1.9289 \\
KOH & $v$ & 0.0050 & 0.0251 & 0.98 & 0.2709 \\
KOH & $\mu$ & 0.0066 & 0.0568 & 0.97 & 0.3486 \\
KOH & $w_r$ & 0.0114 & 0.1470 & 0.78 & 0.6186 \\
KOH & $w_v$ & 0.0158 & 0.0530 & 0.93 & 0.4002 \\
\hline
MOS & $a$ & 0.0115 & \textbf{0.1239} & 0.98 & 1.2471 \\
MOS & $v$ & \textbf{0.0034} & \textbf{0.0148} & 0.98 & 0.1540 \\
MOS & $\mu$ & \textbf{0.0022} &\textbf{ 0.0337} & 0.96 & 0.2192 \\
MOS & $w_r$ & \textbf{0.0064} & \textbf{0.1050} & 0.80 & 0.4732 \\
MOS & $w_v$ & \textbf{0.0102} & \textbf{0.0294} & 0.95 & 0.2354 \\
\hline
\end{tabular}
\end{table}

\begin{table}[htbp]
  \centering
  \caption{Output uncertainty metrics between KOH and our MOS approach for the $SIRV$ model using the $A_{IV}$ operator. Error metrics are first time-averaged and then averaged across 100 Monte-Carlo experimental replications. Bold values for the bias and RMSE are the best for the given parameter.}
  \label{tab:iv_output_metrics}
  \begin{tabular}{l|cc}
    \hline\hline
    & \textbf{KOH} & \textbf{MOS} \\
    \cline{2-3}
    \textbf{Metric} & Mean & Mean \\
    \hline
    Time Integrated RMSE           & 0.0173 & \textbf{0.0159} \\
    Time Integrated Bias           & -0.0015 & \textbf{-0.0014} \\
    Time Integrated Coverage       & 1.0000 & 1.0000 \\
    Time Integrated Interval Width & 0.2533 & 0.1885 \\
    \hline\hline
  \end{tabular}
\end{table}

% \begin{table}[htbp]
%   \centering
%   \caption{Output uncertainty metrics between KOH and our MOS approach for the $SIRV$ model with observations of the $I$ and $V$ states. Error metrics are first time-averaged and then averaged across 100 Monte-Carlo experimental replications.}
%   \label{tab:iv_output_metrics}
%  \begin{tabular}{l|cc|cc}
%     \hline\hline
%     & \multicolumn{2}{c}{\textbf{KOH}} & \multicolumn{2}{c}{\textbf{MOS}} \\
%     \cline{2-3} \cline{4-5}
%     \textbf{Metric} & Mean & Median & Mean & Median \\
%     \hline
%     Time Integrated RMSE           & 0.0173 & 0.0140 & \textbf{0.0159} & \textbf{0.0126} \\
%     Time Integrated Bias           & -0.0015 & -0.0009 & \textbf{-0.0014} & \textbf{-0.0007} \\
%     Time Integrated Coverage       & 1.0000 & 1.0000 & 1.0000 & 1.0000 \\
%     Time Integrated Interval Width & 0.2533 & 0.2278 & 0.1885 & 0.1654 \\
%     \hline\hline
% \end{tabular}
% \end{table}

\begin{figure}
    \centering
    \includegraphics[width=\linewidth]{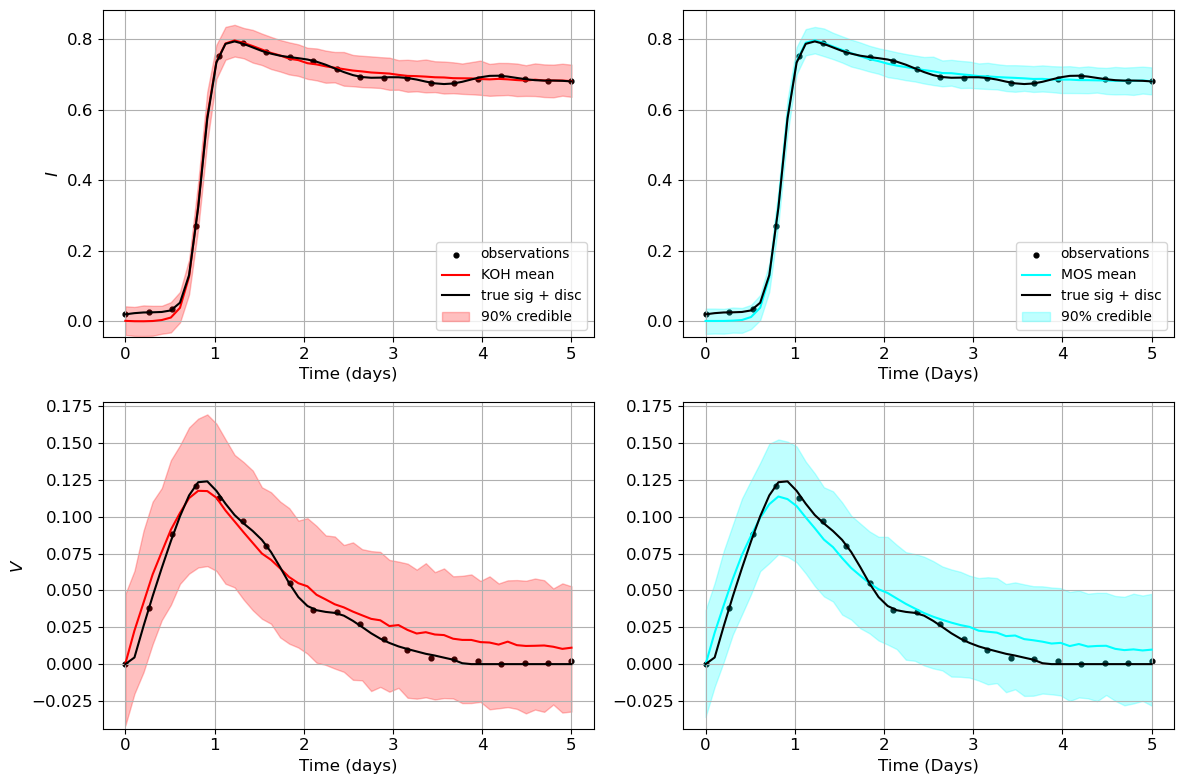}       
    \caption{Output UQ for $SIRV$ model with the $A_{IV}$ observation operator using one representative dataset. Posterior samples of the simulator and discrepancy are drawn from either the KOH (left column, red) or MOS (right column, cyan) approach.}
    \label{fig:IV_Output_UQ}
\end{figure}

\pagebreak

\pagebreak

\section{Conclusion and Future Work}\label{sec:Discussion}
Model discrepancy is a significant source of uncertainty in the case of simulation models \citep{hart2026toward,Kimpton2025,Paun2020}, yet this form of uncertainty is sparsely considered in physics and biological models governed by complex ODE models. To address this gap, we provide a novel approach to multi-output ODE model calibration under model discrepancy that is demonstrated to improve upon the dominant existing approach in several key areas.  %including estimation and uncertainty quantification of both model parameters and discrepancy-augmented model predictions. 
Central to our approach is a new discrepancy prior, semi-automatically elicited from the sensitivity structure of the forward model, that extends the orthogonal-prior construction of \cite{Plumlee03072017} to multivariate ODE calibration. %Differential equation models are ideal for such a framework, given that numerous applications uses nonlinear differential equation systems to describe complex, observable dynamics \citep{Colebank2024G,Zi2011,Kimpton2025}.
Extensive experimental evaluation demonstrates that our proposed MOS approach outperforms the standard method of multivariate calibration under model discrepancy (KOH). In particular, the MOS approach consistently results in lower estimation errors and largely provides well-calibrated intervals for (identifiable) model parameters and process predictions. This is anticipated, as the use of sensitivity-based priors on the discrepancy helps enforce a key structural identifiability condition on the target parameter. %Importantly, we find that this reduction in interval width does not lead to a drastic reduction in coverage in the parameters. In addition, the RMSE in model parameters is generally smaller on average for all model types, while the average bias is less consistent. This suggests that the MOS approach provides reduced uncertainty and, on average, more accurate parameter estimates compared to the $L^2$ minimizer.

%We can also draw similar conclusions in output space across all model types. We again find that the MOS approach provides a lower average time-integrated RMSE across all model types, while the average time-integrated bias is comparable but sometimes slightly larger than the KOH. Across all models, MOS provides a lower average time integrated interval width than KOH while till obtained a 95\% coverage on average. The sample predictions provided in our results echo this qualitatively. We tend to see that KOH has larger predictive uncertainty than MOS, even when it appears that the model can replicate the dynamics of the system well.

Our approach is not without several limitations. First, this specific approach leverages analytical sensitivities for ODE models. This would be limited for systems with parameters and/or state variables that in very high dimensions. Deriving sensitivity equations can be timely and complex in this setting. In addition, the computational complexity of solving the sensitivity system simultaneously with the ODE can be burdensome with high-dimensional parameter vectors. Potential solutions to this, such as the use of automatic differentiation and Jacobian-vector products in packages like Jax \citep{jax2018github}, should be investigated next. We generated data for our study from the models and MOS approach, hence our results are not for actual measurements that might suffer from even larger model discrepancy. Additional investigations into the selection of Gaussian process kernels is also warranted, as specific kernels may be better suited for certain simulations based on their sensitivity systems. %Finally, our approach is well suited for ODE systems, yet its extension to infinite-dimensional parameters and random fields, common in partial differential equation (PDE) models, is unknown. Moving forward, more effort should be focused on the use of this methodology in more expensive models with more complex parameter dependencies, which invites the use of surrogate modeling.  
% MJC - Contextualize results here and do less comparison perse

\section*{Acknowledgements}
This work was funded in part by a South Carolina EPSCoR Grant for Applications in Industry and Networking Program (26-GA01), through the University of South Carolina’s NIAID-funded R25AI164581 (R25) BigDataHealth Science Fellow Program, and through an American Heart Association Career Development Award (26CDA1599801, \\ doi:10.58275/AHA.26CDA1599801.pc.gr.243168).

\bibliographystyle{siam}
\bibliography{refs}

\appendix
\section{Proofs}
\noindent{\textbf{Lemma~\ref{lem:1st_order_orthogonality}}}:
\begin{proof}
Let $a_j$ be a column vector of $k$'th row of $A$. Consider that for each component of the loss and for any $\eta \in \Theta$:
$$
\begin{aligned}
    \|\xi_j(\cdot) - a_{j}^{\intercal}z(\cdot,\eta)\|_{L^2(\Omega)}^2 &=  \|a_{j}^{\intercal}z(\cdot, \theta) + \delta_j(\cdot, \theta) - a_{j}^{\intercal}z(\cdot,\eta)\|_{L^2(\Omega)}^2 \\
     &= \|\delta_j(\cdot,\theta)\|_{L^2(\Omega)}^2 + 2\langle \delta_j(\cdot|\theta), a_{j}^{\intercal}(z(\cdot, \theta) - z(\cdot,\eta)\rangle_{L^2(\Omega)} \\
     &\quad\quad + \|a_{j}^{\intercal}(z(\cdot, \theta) - z(\cdot,\eta))\|_{L^2(\Omega)}^2.
\end{aligned}
$$
The above decomposition implies we have the form of the gradient of the $L^2$-calibration loss for the $j$'th output dimension:
$$
\begin{aligned}
    \nabla_{\eta}\|\xi_j(\cdot) - a_{j}^{\intercal}z(\cdot,\eta)\|_{L^2(\Omega)}^2 &= \nabla_{\eta}\|\delta_j(\cdot;\theta)\|_{L^2(\Omega)}^2 + 2\nabla_{\eta}\langle \delta_j(\cdot;\theta), a_{j}^{\intercal}(z(\cdot, \theta) - z(\cdot,\eta))\rangle_{L^2(\Omega)}\\
     &\quad\quad + \nabla_{\eta}\|a_{j}^{\intercal}(z(\cdot, \theta) - z(\cdot,\eta))\|_{L^2(\Omega)}^2 \\
     &= -2\langle \delta_j(\cdot;\theta), a_{j}^{\intercal}\mathcal{S}(\cdot, \eta)\rangle_{L^2(\Omega)} \\
     &-2\langle a_{j}^{\intercal}z(\cdot, \theta), a_{j}^{\intercal}\mathcal{S}(\cdot, \eta)\rangle_{L^2(\Omega)} + 2\langle a_{j}^{\intercal}z(\cdot, \eta), a_{j}^{\intercal}\mathcal{S}(\cdot,\eta)\rangle_{L^2(\Omega)} 
\end{aligned}
$$
Evaluating the above gradient at $\theta$, the bottom two terms in the final equality cancel. Plugging in this expression for the $j$'th component into the full direct sum we have
\begin{equation}\label{eqn:multivariate_orthogonality_condition}
\begin{aligned}
        \nabla_{\eta=\theta} \|\xi(\cdot) - Az(\cdot, \theta) \|_{\mathcal{H}}^2 &= \nabla_{\eta=\theta}\sum_{k=1}^J\|\xi_k(\cdot) - a_k^{\intercal}z(\cdot,\theta)\|_{L^2(\Omega)}^2 \\
        &= 
        -2\sum_{j=1}^J\langle \delta_k(\cdot;\theta), a_k^{\intercal}\mathcal{S}(\cdot,\theta)\rangle_{L^2(\Omega)}^2\\
\end{aligned}
\end{equation}
Provided $\theta$ is in the interior of $\Theta$, the first order optimality conditions of $\theta$ coupled with \eqref{eqn:multivariate_orthogonality_condition} imply 
$$
\begin{aligned}
    0 & =\int_{\Omega} \mathcal{S}^{\intercal}(t, \theta)A^{\intercal}\delta(t;\theta)
\end{aligned}
$$
where $0\in\mathbb{R}^P$ denotes the vector of zeros, as desired to show. 
\end{proof}

\noindent{\textbf{Theorem~\ref{thrm:constrained_multivariate_prior}}:} 
\begin{proof}
Denote integrand of the left-hand side of \eqref{eqn:jac_subspace_oth} as $f_{\theta}(t) := S_{z}(t,\theta)^{\intercal}A^{\intercal}\delta_{\theta}$. $f_{\theta}(t)$ is a Gaussian process due to closure under linear operators. Hence, to show \eqref{eqn:jac_subspace_oth} holds with probability 1, its suffices to show that 
$\mathbb{E}[\int_{\Omega}f_{\theta}]=0\in\mathbb{R}^{p}$ and $\text{Diag}\{\text{Var}[\int_{\Omega} f_{\theta}]\}=0\in\mathbb{R}^{p}$
The zero mean condition of $\delta_{\theta}$ implies $f_{\theta}$ is mean zero and so the zero mean condition follows trivially. We now consider the variance. 
$$
\begin{aligned}
    \text{Var}[\int_{\Omega} f_{\theta}] &= \mathbb{E}[\left(\int_{\Omega}S_{z}(t,\theta)^{\intercal}A^{\intercal}\delta_{\theta}(t)dt\right)\left(\int_{\Omega}S_{z}(t^{\prime},\theta)^{\intercal}A^{\intercal}\delta_{\theta}(t^{\prime})dt^{\prime}\right)^{\intercal}] \\ 
    &= \int_{\Omega}\int_{\Omega}S_{z}(t,\theta)^{\intercal}A^{\intercal}\mathbb{E}[\delta_{\theta}(t)\delta_{\theta}^{\intercal}(t^{\prime})]A S_{z}(t^{\prime},\theta)dtdt^{\prime}\\
    &= \int_{\Omega}\int_{\Omega}S_{z}(t,\theta)^{\intercal}A^{\intercal}C_{\theta,\phi}(t,t^{\prime})A S_{z}(t^{\prime},\theta)dtdt^{\prime}\\
    &=\underbrace{\int_{\Omega}\int_{\Omega}S_{z}(t,\theta)^{\intercal}A^{\intercal}C_{\phi}(t,t^{\prime})A S_{z}(t^{\prime},\theta)dtdt^{\prime}}_{H_{\theta}} - \\
    &\quad \int_{\Omega}\int_{\Omega}S_{z}(t,\theta)^{\intercal}A^{\intercal}[h_{\theta}^{\intercal}(t)H_{\theta}^{-1} h_{\theta}(t^{\prime})]A S_{z}(t^{\prime},\theta)dtdt^{\prime} \\
    &=H_{\theta} - \underbrace{[\int_{\Omega}S_{z}(t,\theta)^{\intercal}A^{\intercal}h_{\theta}^{\intercal}(t)dt]}_{H_{\theta}}H_{\theta}^{-1}\underbrace{[\int_{\Omega}h_{\theta}(t^{\prime})]A S_{z}(t^{\prime},\theta)dt^{\prime}]}_{H_{\theta}} \\
    &= H_{\theta} - H_{\theta}H_{\theta}^{-1}H_{\theta} = 0\in\mathbb{R}^{p\times p}
\end{aligned}
$$
Hence, the $p$-dimensional Gaussian random vector $\int f_{\theta}$ has zero mean and covariance, and thus equals zero with probability 1, as desired to show. 
\end{proof}

\section{Sensitivity Results}\label{sec:AppSens}
We calculated average sensitivity metrics by using the sensitivity systems defined for each model. We calculate a scalar index for the model sensitivity by first integrating the time series sensitivity to provide $\bar{S}_i = \int_\Omega S_i(t)dt$ for each dataset. We calculate the sensitivity at the data generating parameter. Then, we average over all 100 data instances to provide an average sensitivity, $\mathcal{S}_i=\frac{1}{100}\sum_{i=1}^{100}\bar{S}_i$. We then provide plots of a scaled sensitivity, $\mathcal{S}^*_i=\mathcal{S}_i/\underset{i}{\max}{\mathcal{S}_i}$ to visualize the relative impact of the parameters on each state variable.

\begin{figure}[ht!]
    \centering
    \includegraphics[width=0.7\linewidth]{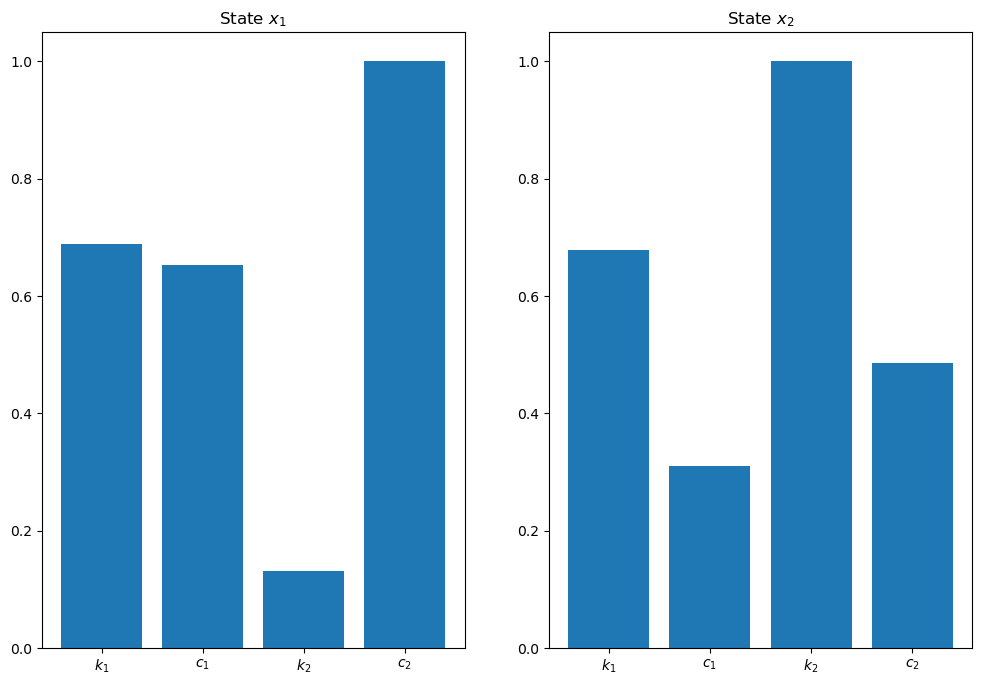}
    \caption{Sample averaged sensitivity metrics across the 100 data instances for the two-mass oscillator. Sensitivities are time-integrated for each state variable, then averaged across each dataset. Results are shown as sensitivities scaled to the maximum sensitivity in the system, $\mathcal{S}^*$.}
    \label{fig:TwoSpring_Sensitivity}
\end{figure}

\begin{figure}[ht!]
    \centering
    \includegraphics[width=\linewidth]{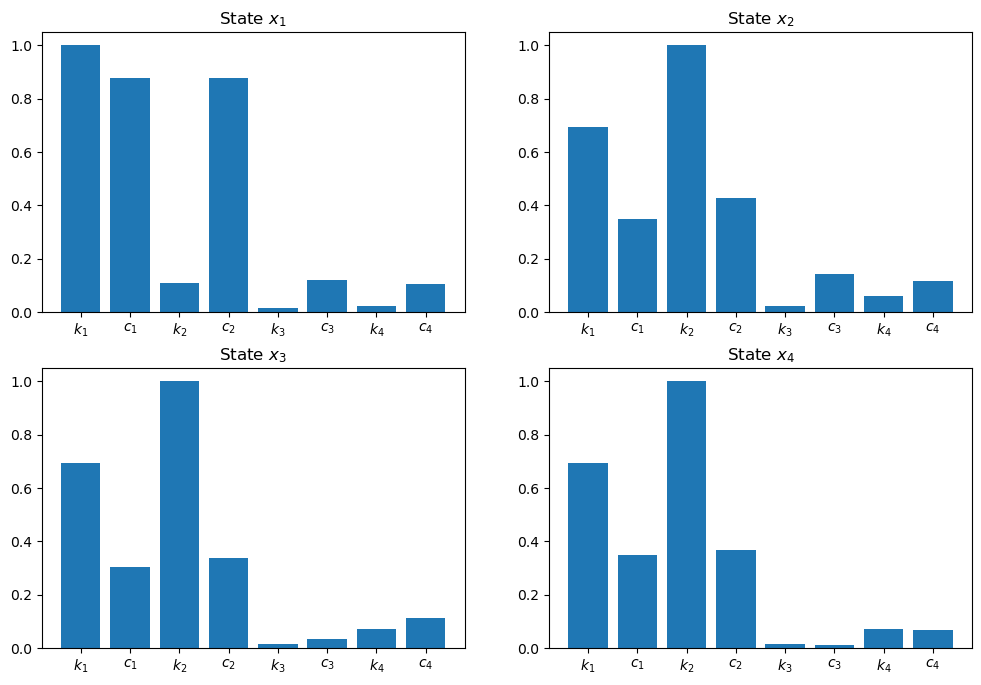}
    \caption{Sample averaged sensitivity metrics across the 100 data instances for the four-mass oscillator. Sensitivities are time-integrated for each state variable, then averaged across each dataset. Results are shown as sensitivities scaled to the maximum sensitivity in the system, $\mathcal{S}^*$.}
    \label{fig:FourSpring_Sensitivity}
\end{figure}

\begin{figure}[ht!]
    \centering
    \includegraphics[width=0.7\linewidth]{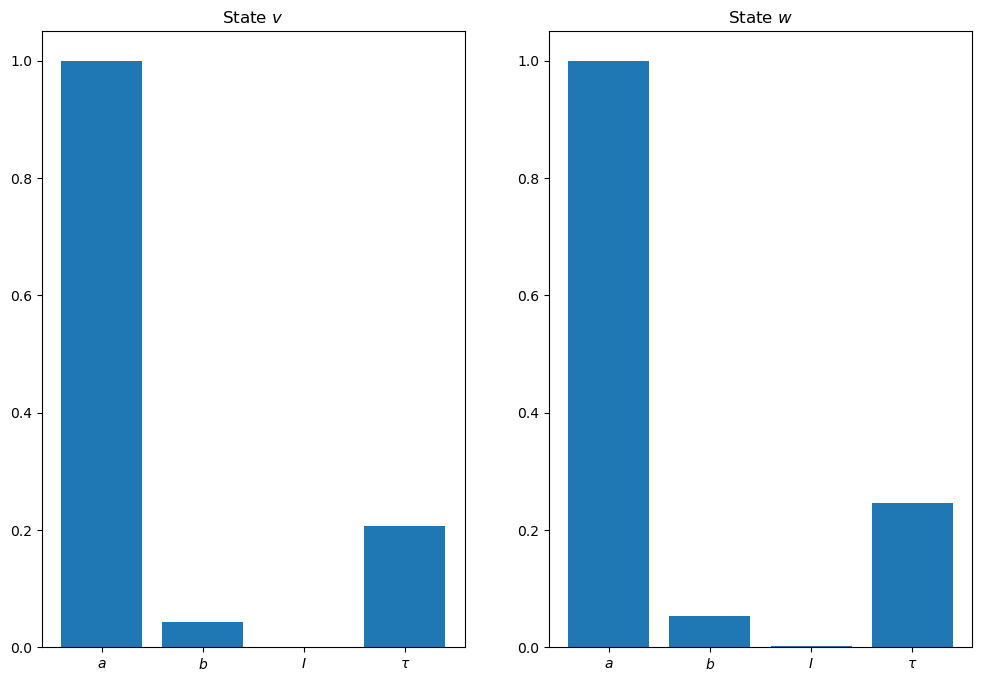}
    \caption{Sample averaged sensitivity metrics across the 100 data instances for the FitzHugh-Nagumo model. Sensitivities are time-integrated for each state variable, then averaged across each dataset. Results are shown as sensitivities scaled to the maximum sensitivity in the system, $\mathcal{S}^*$.}
    \label{fig:Fitz_Sensitivity}
\end{figure}

\begin{figure}[ht!]
    \centering
    \includegraphics[width=0.7\linewidth]{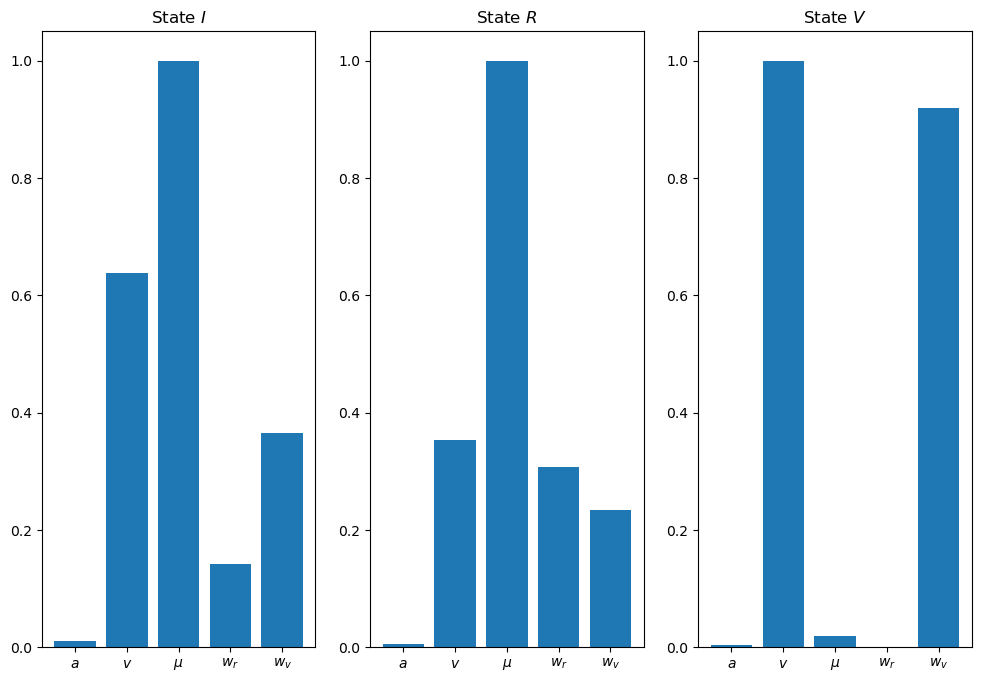}
    \caption{Sample averaged sensitivity metrics across the 100 data instances for the $SIRV$ model. We focus on the $I,R,$ and $V$ states. Sensitivities are time-integrated for each state variable, then averaged across each dataset. Results are shown as sensitivities scaled to the maximum sensitivity in the system, $\mathcal{S}^*$.}
    \label{fig:SIRV_Sensitivity}
\end{figure}

\end{document}